\documentclass[12pt,draftcls,onecolumn]{IEEEtran}

\usepackage{cite}      

\usepackage[dvips]{color}
\usepackage{amssymb}
\usepackage{amsmath}
\usepackage{graphicx}
\usepackage{algorithmic}
\newtheorem{theorem}{Theorem}
\newtheorem{lemma}{Lemma}
\newtheorem{proposition}{Proposition}
\newtheorem{corollary}{Corollary}
\newtheorem{definition}{Definition}

\newtheorem{remark}{Remark}
\newcommand{\R}{\mathbb{R}}
\usepackage{setspace}
\singlespacing

\interdisplaylinepenalty=2500

%

\newcommand{\longversion}[1]{}

\renewcommand{\longversion}[1]{}

\begin{document}

\title{Robust Regression and Lasso}
\author{Huan~Xu\IEEEmembership{},~Constantine~Caramanis,~\IEEEmembership{Member},~and~
Shie~Mannor,~\IEEEmembership{Member} \thanks{A preliminary version
of this paper was presented at Twenty-Second Annual Conference on
Neural Information Processing Systems.}
 \thanks{H. Xu and S. Mannor are with the Department of Electrical and
Computer Engineering, McGill University, Montr{\'e}al, H3A2A7,
Canada email: (xuhuan@cim.mcgill.ca;
shie.mannor@mcgill.ca)}
\thanks{C. Caramanis is with the Department of Electrical and Computer
Engineering, The University of Texas at Austin, Austin, TX 78712 USA
email: (cmcaram@ece.utexas.edu).} \markboth{~Oct.~2008}{Shell
\MakeLowercase{\textit{et al.}}: Bare Demo of IEEEtran.cls for
Journals}} \maketitle
\begin{abstract}
Lasso, or $\ell^1$ regularized least squares, has been explored
extensively for its remarkable sparsity properties. It is shown in
this paper  that the solution to Lasso, in addition to its sparsity,
has robustness properties: it is the solution to a robust
optimization problem. This has two important consequences. First,
robustness provides a connection of the regularizer to a physical
property, namely, protection from noise. This allows a principled
selection of the regularizer, and in particular, generalizations of
Lasso that also yield convex optimization problems are obtained by
considering different uncertainty sets.

Secondly, robustness can itself be used as an avenue to exploring
different properties of the solution. In particular, it is shown
that robustness of the solution  explains why the solution is
sparse. The analysis as well as the specific results obtained differ
from standard sparsity results, providing different geometric
intuition. Furthermore, it is shown that the robust optimization
formulation is related to kernel density estimation, and based on
this approach,  a proof that Lasso is consistent is given using
robustness directly. Finally, a  theorem saying that sparsity and
algorithmic stability contradict each other, and hence Lasso is not
stable, is presented.
\end{abstract}
\begin{IEEEkeywords}
Statistical Learning, Regression, Regularization, Kernel density
estimator, Lasso, Robustness, Sparsity, Stability.
\end{IEEEkeywords}

%

\section{Introduction}
In this paper we consider linear regression problems with
least-square error. The problem is to find a vector $\mathbf{x}$ so that
the $\ell_2$ norm of the residual $\mathbf{b}-A\mathbf{x}$ is minimized,
for a given matrix $A\in \mathbb{R}^{n\times m}$ and vector
$\mathbf{b}\in \mathbb{R}^n$. From a learning/regression
perspective, each row of $A$ can be regarded as a training sample,
and the corresponding element of $b$ as the target value of this
observed sample. Each column of $A$ corresponds to a feature, and
the objective is to find a set of weights so that the weighted sum
of the feature values approximates the target value.

It is well known that minimizing the least squared error can lead to
sensitive solutions \cite{Elden85,Golub89,Higham92,Fierro94}.
Many regularization methods have been proposed to decrease this sensitivity.
Among them, Tikhonov regularization \cite{Tikhonov77} and
Lasso~\cite{Tibshirani96,Efron04} are two widely known and cited algorithms.
These methods minimize a weighted sum of the residual norm and a
certain regularization term, $\|\mathbf{x}\|_2$ for Tikhonov
regularization and $\|\mathbf{x}\|_1$ for Lasso. In addition to providing
regularity, Lasso is also known for the tendency to select sparse solutions.
Recently this has attracted much attention for its ability to reconstruct sparse
solutions when sampling occurs far below the Nyquist rate, and also for its
ability to recover the sparsity pattern exactly with probability one, asymptotically
as the number of observations increases (there is an extensive literature on this
subject, and we refer the reader to
\cite{Chen98,Feuer03,Candes06,Tropp04,Wainwright2006} and references therein).

The first result of this paper is that the solution to Lasso has
robustness properties: it is the solution to a robust optimization
problem. In itself, this interpretation of Lasso as the solution to
a robust least squares problem is a development in line with the
results of \cite{Ghaoui97}. There, the authors propose an
alternative approach of reducing sensitivity of linear regression by
considering a robust version of the regression problem, i.e.,
minimizing the worst-case residual for the observations under some
unknown but bounded disturbance. Most of the research in this area
considers either the case where the disturbance is row-wise
uncoupled \cite{Shivaswamy06}, or the case where the Frobenius norm
of the disturbance matrix is bounded \cite{Ghaoui97}.

None of these robust optimization approaches produces a solution
that has sparsity properties (in particular, the solution to Lasso
does not solve any of these previously formulated robust
optimization problems). In contrast, we investigate the robust
regression problem where the uncertainty set is defined by
feature-wise constraints. Such a noise model is of interest when
values of features are obtained with some noisy pre-processing
steps, and the magnitudes of such noises are known or bounded.
Another situation of interest is where features are meaningfully
coupled.  We define {\it coupled} and {\it uncoupled} disturbances
and uncertainty sets precisely in Section \ref{ssec:formulation}
below. Intuitively, a disturbance is feature-wise coupled if the
variation or disturbance across features satisfy joint constraints,
and uncoupled otherwise.

Considering the solution to Lasso as the solution  of a robust least
squares problem has two important consequences. First, robustness
provides a connection of the regularizer to a physical property,
namely, protection from noise. This allows more principled selection
of the regularizer, and in particular, considering different
uncertainty sets, we construct generalizations of Lasso that also
yield convex optimization problems.

Secondly, and perhaps most significantly, robustness  is a strong
property that can itself be used as an avenue to investigating
different properties of the solution. We show that robustness of the
solution can explain why the solution is sparse. The analysis as
well as the specific results we obtain differ from standard sparsity
results, providing different geometric intuition, and extending
beyond the least-squares setting. Sparsity results obtained for
Lasso ultimately depend on the fact that introducing additional
features incurs larger $\ell^1$-penalty than the least squares error
reduction. In contrast, we exploit the fact that a robust solution
is, by definition, the optimal solution under a worst-case
perturbation. Our results show that, essentially, a coefficient of
the solution is nonzero if the corresponding feature is relevant
under all allowable perturbations. In addition to sparsity, we also
use robustness directly to prove consistency of Lasso.
%

We briefly list the main contributions as well as the organization
of this paper.
\begin{itemize}
\item In Section~\ref{sec.robustregressionformulation}, we formulate the robust regression problem with feature-wise independent disturbances, and show that this
formulation is equivalent to a least-square problem with a weighted
$\ell_1$ norm regularization term. Hence, we provide an
interpretation
of Lasso from a robustness perspective.
\item We generalize the robust
 regression formulation to loss functions of arbitrary norm in Section~\ref{sec.generalization}.
 We also consider uncertainty sets
that require disturbances of different features to satisfy joint
conditions. This can be used to mitigate the conservativeness of the
robust solution  and to obtain solutions with additional
properties. 

\item In
Section~\ref{sec.sparse}, we present new sparsity results for the
robust regression problem with feature-wise independent
disturbances. This provides a new robustness-based explanation to
the sparsity of Lasso. Our approach gives new analysis and also
geometric intuition, and furthermore allows one to obtain sparsity
results for more general loss functions, beyond the squared loss.

\item Next, we relate Lasso to kernel density estimation in Section~\ref{sec.density}. This allows us to re-prove consistency
in a statistical learning setup, using the new robustness tools and
formulation we introduce. Along with our results on sparsity, this
illustrates the power of robustness in explaining and also exploring
different properties of the solution.

\item Finally, we prove in Section~\ref{sec.stability}
a ``no-free-lunch'' theorem, stating that an algorithm that
encourages sparsity cannot be stable.
\end{itemize}
{\bf Notation}. We use capital letters to represent matrices, and
boldface letters to represent column vectors. Row vectors are
represented as the transpose of column vectors. For a vector
$\mathbf{z}$, $z_i$ denotes its $i^{th}$ element. Throughout the
paper, $\mathbf{a}_i$ and $\mathbf{r}_j^\top$ are used to denote the
$i^{th}$ column and the $j^{th}$ row of the observation matrix $A$,
respectively. We use $a_{ij}$ to denote the $ij$ element of $A$,
hence it is the $j^{th}$ element of $\mathbf{r}_i$, and $i^{th}$
element of $\mathbf{a}_j$. For a convex function $f(\cdot)$,
$\partial f(\mathbf{z})$ represents any of its sub-gradients
evaluated at $\mathbf{z}$. A vector with length $n$ and each element
equals $1$ is denoted as $\mathbf{1}_n$.

\section{Robust Regression with Feature-wise Disturbance}\label{sec.robustregressionformulation}
In this section, we show that our robust regression formulation recovers Lasso as a special case. We also derive
probabilistic bounds that guide in the construction of the uncertainty set.

The regression formulation we consider differs from the standard
Lasso formulation, as we minimize the norm of the error, rather than
the squared norm. It is known that these two coincide up to a change
of the regularization coefficient. Yet as we discuss above, our
results lead to more flexible and potentially powerful robust
formulations,  and give new insight into known results.

\subsection{Formulation}
\label{ssec:formulation}
Robust linear regression considers the case where the
observed matrix is corrupted by some potentially malicious disturbance.
The objective is to find the optimal solution in the worst case sense.
This is usually formulated as the following min-max problem,
\begin{equation}\begin{split}\label{equ.robustregression}&\mbox{{\bf Robust Linear Regression:}}\\ &\min_{\mathbf{x}\in
\mathbb{R}^m}\left\{\max_{\Delta A\in \mathcal{U}}
\|\mathbf{b}-(A+\Delta
A)\mathbf{x}\|_2\right\},\end{split}
\end{equation}
where $\mathcal{U}$ is called the {\it uncertainty set}, or the set of admissible disturbances of the matrix $A$.
In this section, we consider the class of uncertainty sets that bound the norm of the disturbance to each feature,
without placing any joint requirements across feature disturbances. That is, we consider the class of uncertainty
sets:
\begin{equation}\label{equ.featuredisturbance}\mathcal{U}\triangleq\Big\{(\boldsymbol{\delta}_1,\cdots,\boldsymbol{\delta}_m)\Big|\|\boldsymbol{\delta}_i\|_2\leq
c_i,\,\,\,i=1,\cdots, m\Big\},
\end{equation}
for given $c_i\geq 0$. We call these uncertainty sets {\it
feature-wise uncoupled}, in contrast to {\it coupled} uncertainty
sets that require disturbances of different features to satisfy some
joint constraints (we discuss these extensively below, and their
significance). While the inner maximization problem of
(\ref{equ.robustregression}) is nonconvex, we show in the next
theorem that uncoupled norm-bounded uncertainty sets lead to an
easily solvable optimization problem.
\begin{theorem}\label{them.l2ci}The robust regression
problem~(\ref{equ.robustregression}) with uncertainty set of the form
(\ref{equ.featuredisturbance}) is equivalent to the following $\ell^1$
regularized regression problem:
\begin{equation}\label{equ.reg_reg}
\min_{\mathbf{x}\in \mathbb{R}^m}\Big\{
\|\mathbf{b}-A\mathbf{x}\|_2+\sum_{i=1}^m c_i|x_i|\Big\}.
\end{equation}
\end{theorem}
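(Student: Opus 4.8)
The plan is to prove the equivalence by showing that the inner maximization in $(\ref{equ.robustregression})$ evaluates in closed form to the regularizer in $(\ref{equ.reg_reg})$, for every fixed $\mathbf{x}$. Since the two optimization problems share the same outer $\min_{\mathbf{x}}$, it suffices to establish the pointwise identity
\begin{equation*}
\max_{\Delta A\in \mathcal{U}} \|\mathbf{b}-(A+\Delta A)\mathbf{x}\|_2 = \|\mathbf{b}-A\mathbf{x}\|_2 + \sum_{i=1}^m c_i|x_i|.
\end{equation*}
First I would rewrite the perturbed residual using the feature-wise (column) structure: if $\Delta A = (\boldsymbol{\delta}_1,\cdots,\boldsymbol{\delta}_m)$, then $\Delta A\,\mathbf{x} = \sum_{i=1}^m x_i \boldsymbol{\delta}_i$, so the residual becomes $\mathbf{b}-A\mathbf{x} - \sum_{i=1}^m x_i\boldsymbol{\delta}_i$. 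The key observation is that the uncertainty set $(\ref{equ.featuredisturbance})$ places independent norm constraints $\|\boldsymbol{\delta}_i\|_2 \le c_i$ on each column, so the maximization decouples across the $\boldsymbol{\delta}_i$.

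For the upper bound direction, I would apply the triangle inequality: for any admissible $\Delta A$,
\begin{equation*}
\Big\|\mathbf{b}-A\mathbf{x} - \sum_{i=1}^m x_i\boldsymbol{\delta}_i\Big\|_2 \le \|\mathbf{b}-A\mathbf{x}\|_2 + \sum_{i=1}^m |x_i|\,\|\boldsymbol{\delta}_i\|_2 \le \|\mathbf{b}-A\mathbf{x}\|_2 + \sum_{i=1}^m c_i|x_i|,
\end{equation*}
which shows the left side is at most the right side. The more delicate step is achieving equality, i.e.\ exhibiting a specific worst-case disturbance $\Delta A^\ast \in \mathcal{U}$ attaining the bound. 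The natural choice is to align each $\boldsymbol{\delta}_i$ with the residual direction: let $\mathbf{u} = (\mathbf{b}-A\mathbf{x})/\|\mathbf{b}-A\mathbf{x}\|_2$ be the unit vector along the nominal residual, and set $\boldsymbol{\delta}_i^\ast = -c_i\,\mathrm{sgn}(x_i)\,\mathbf{u}$. Then each term $-x_i\boldsymbol{\delta}_i^\ast = c_i|x_i|\,\mathbf{u}$ points in the same direction as the residual, so the vectors add constructively (collinearly) and the triangle inequality holds with equality, giving exactly $\|\mathbf{b}-A\mathbf{x}\|_2 + \sum_i c_i|x_i|$.

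The main obstacle is the degenerate case $\mathbf{b}-A\mathbf{x} = \mathbf{0}$, where the unit vector $\mathbf{u}$ is undefined and the construction above breaks down. I would handle this separately: when the nominal residual vanishes, I must still produce a disturbance that makes the perturbed residual as large as $\sum_i c_i|x_i|$. Here one picks any fixed unit vector $\mathbf{v}$ and sets $\boldsymbol{\delta}_i^\ast = -c_i\,\mathrm{sgn}(x_i)\,\mathbf{v}$, so that the perturbed residual equals $\big(\sum_i c_i|x_i|\big)\mathbf{v}$, whose norm is $\sum_i c_i|x_i|$, matching the claimed value since $\|\mathbf{b}-A\mathbf{x}\|_2 = 0$. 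A secondary subtlety is the treatment of coordinates with $x_i = 0$, for which $\mathrm{sgn}(x_i)$ is irrelevant since the corresponding term contributes nothing; any admissible $\boldsymbol{\delta}_i$ works there. Once the pointwise identity is established in both cases, taking the infimum over $\mathbf{x}$ on both sides yields the equivalence of the two problems, completing the proof.
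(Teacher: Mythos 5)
Your proposal is correct and follows essentially the same route as the paper's proof: the pointwise evaluation of the inner maximum via the triangle inequality for the upper bound, the worst-case disturbance $\boldsymbol{\delta}_i^*=-c_i\,\mathrm{sgn}(x_i)\,\mathbf{u}$ aligned with the normalized residual for the matching lower bound, and the same handling of the degenerate case $A\mathbf{x}=\mathbf{b}$ by an arbitrary unit vector. No gaps.
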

\begin{proof}
Fix $\mathbf{x}^*$. We prove that $\max_{\Delta A\in \mathcal{U}}
\|\mathbf{b}-(A+\Delta A)\mathbf{x}^*\|_2 =
\|\mathbf{b}-A\mathbf{x}^*\|_2+\sum_{i=1}^m c_i|x^*_i|$.

The left hand side can be written as
\begin{equation}\label{equ.proofsmall}
\begin{split}
&\max_{\Delta A\in \mathcal{U}} \|\mathbf{b}-(A+\Delta
A)\mathbf{x}^*\|_2
\\=& \max_{(\boldsymbol{\delta}_1,\cdots, \boldsymbol{\delta}_m)|\|\boldsymbol{\delta}_i\|_2\leq
c_i} \Big\|\mathbf{b}-\big(A+(\boldsymbol{\delta}_1,\cdots,
\boldsymbol{\delta}_m)\big)\mathbf{x}^*\Big\|_2
\\=& \max_{(\boldsymbol{\delta}_1,\cdots, \boldsymbol{\delta}_m)| \|\boldsymbol{\delta}_i\|_2\leq
c_i}  \|\mathbf{b}-A\mathbf{x}^*-\sum_{i=1}^m
x_i^*\boldsymbol{\delta}_i \|_2
\\\leq & \max_{(\boldsymbol{\delta}_1,\cdots, \boldsymbol{\delta}_m)| \|\boldsymbol{\delta}_i\|_2\leq
c_i} \Big\|\mathbf{b}-A\mathbf{x}^*\Big\|_2+\sum_{i=1}^m \|x_i^*
\boldsymbol{\delta}_i\|_2
\\\leq &  \|\mathbf{b}-A\mathbf{x}^* \|_2+\sum_{i=1}^m |x_i^*|
c_i.
\end{split}
\end{equation}
Now, let
\[\mathbf{u}\triangleq \left\{\begin{array}{ll}
\frac{\mathbf{b}-A\mathbf{x}^*}{\|\mathbf{b}-A\mathbf{x}^*\|_2} &
\mbox{if}\,\, A\mathbf{x}^*\not=\mathbf{b},\\ \mbox{any vector with
unit $\ell^2$ norm} &\mbox{otherwise};\end{array}\right.
\]
and let
\[\boldsymbol{\delta}^*_i \triangleq -c_i \mathrm{sgn}(x_i^*) \mathbf{u}.\] Observe that
$\|\boldsymbol{\delta}^*_i\|_2\leq c_i$, hence $\Delta
A^*\triangleq(\boldsymbol{\delta}^*_1,\cdots,
\boldsymbol{\delta}^*_m)\in \mathcal{U}$. Notice that
\begin{equation}\label{equ.prooflarge}
\begin{split}
&\max_{\Delta A\in \mathcal{U}} \|\mathbf{b}-(A+\Delta
A)\mathbf{x}^*\|_2\\
 \geq&\|\mathbf{b}-(A+\Delta A^*)\mathbf{x}^*\|_2\\
=&\Big\|\mathbf{b}-\big(A+(\boldsymbol{\delta}_1^*,\cdots,\boldsymbol{\delta}_m^*)\big)\mathbf{x}^*\Big\|_2\\
=&\Big\|(\mathbf{b}-A\mathbf{x}^*) -\sum_{i=1}^m \big(-x_i^* c_i
\mathrm{sgn}(x_i^*)\mathbf{u}\big)\Big\|_2\\
=&\Big \|(\mathbf{b}-A\mathbf{x}^*)+(\sum_{i=1}^m c_i
|x_i^*|)\mathbf{u}\Big\|_2\\
=& \|\mathbf{b}-A\mathbf{x}^*\|_2 +\sum_{i=1}^m c_i |x_i^*|.
\end{split}
\end{equation}
The last equation holds from the definition of $\mathbf{u}$.

Combining Inequalities~(\ref{equ.proofsmall}) and
(\ref{equ.prooflarge}), establishes the equality
$\max_{\Delta A\in \mathcal{U}} \|\mathbf{b}-(A+\Delta
A)\mathbf{x}^*\|_2 = \|\mathbf{b}-A\mathbf{x}^*\|_2+\sum_{i=1}^m
c_i|x^*_i|$ for any $\mathbf{x}^*$. Minimizing over $\mathbf{x}$ on both sides
proves the theorem.
\end{proof}
Taking $c_i=c$  and normalizing $\mathbf{a}_i$ for
all $i$, Problem (\ref{equ.reg_reg}) recovers the well-known Lasso
\cite{Tibshirani96,Efron04}.

\subsection{Uncertainty Set Construction}
The selection of an uncertainty set $\mathcal{U}$ in Robust Optimization is of fundamental importance.
One way this can be done is as an approximation of so-called {\it chance constraints}, where
a deterministic constraint is replaced by the requirement that a constraint is satisfied with at
least some probability. These can be formulated when we know the distribution exactly,
or when we have only partial information of the uncertainty, such as, e.g., first and second moments.
This chance-constraint formulation is particularly important when the distribution has large support,
rendering the naive robust optimization formulation overly pessimistic.

 For confidence level $\eta$, the chance constraint formulation becomes:
\begin{equation*}\begin{split}
\mbox{minimize:}\quad &t\\\mbox{Subject to:} \quad&
\mathrm{Pr}(\|\mathbf{b}-(A+{\Delta A})\mathbf{x}\|_2 \leq t)\geq
1-\eta.
\end{split}\end{equation*}
Here, $\mathbf{x}$ and $t$ are the
decision variables.

Constructing the uncertainty set for feature $i$ can be done quickly via
line search and bisection, as long as we can evaluate $\mathrm{Pr}(\|\mathbf{a}_i\|_2\geq c)$.
If we know the distribution exactly (i.e., if we have complete probabilistic information),
this can be quickly done via sampling. Another setting of
interest is when we have access only to some moments of the distribution of the uncertainty,
e.g., the mean and variance. In this setting, the uncertainty sets are constructed via a
bisection procedure which evaluates the worst-case probability over all distributions with
given mean and variance. We do this using a tight bound on the probability of an
event, given the first two moments.
%

%



In the scalar case, the Markov Inequality provides such a bound. The
next theorem is a generalization of the Markov inequality to
$\mathbb{R}^n$, which bounds the probability where the disturbance
on a given feature is more than $c_i$, if only the first and second
moment of the random variable are known. We postpone the proof to
the appendix, and refer the reader to \cite{BertsimasPopescu04} for
similar results using semi-definite optimization.
\begin{theorem}\label{thm.probabound}Consider a random vector $\mathbf{v}\in \mathbb{R}^n$,
such that $\mathbb{E}(\mathbf{v})=\mathbf{a}$, and
$\mathbb{E}(\mathbf{v} \mathbf{v}^\top)=\Sigma$, $\Sigma \succeq 0$.
Then we have
\begin{equation}\label{equ.chancebound}\Pr\{\|\mathbf{v}\|_2\geq c_i\} \leq
\left\{\begin{array}{rl} \min_{P, \mathbf{q}, r,
\lambda}\,\,&\mathrm{Trace}(\Sigma P)+2\mathbf{q}^\top\mathbf{a}+
r\\\mbox{subject to:}\,\,&\left(\begin{array}{ll}P & \mathbf{q} \\ \mathbf{q}^\top & r\end{array}\right)\succeq 0\\
 &\left(\begin{array}{ll} I(m) &
\mathbf{0} \\ \mathbf{0}^\top &-c^2_i\end{array}\right)\preceq
\lambda \left(\begin{array}{ll} P & \mathbf{q}\\ \mathbf{q}^\top &
r-1\end{array} \right)\\ &\lambda \geq
0.\end{array}\right.\end{equation}
\end{theorem}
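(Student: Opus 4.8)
The plan is to prove the stated inequality by exhibiting, for every feasible quadruple $(P,\mathbf{q},r,\lambda)$, a quadratic function that pointwise dominates the indicator of the event $\{\|\mathbf{v}\|_2 \ge c_i\}$; taking expectations then yields the bound, and minimizing over feasible quadruples gives the right-hand side. Concretely, I would associate to $(P,\mathbf{q},r)$ the quadratic $g(\mathbf{v}) \triangleq \mathbf{v}^\top P\mathbf{v} + 2\mathbf{q}^\top\mathbf{v} + r$, which can be written as the quadratic form $\left(\begin{smallmatrix}\mathbf{v}\\1\end{smallmatrix}\right)^\top \left(\begin{smallmatrix}P & \mathbf{q}\\\mathbf{q}^\top & r\end{smallmatrix}\right)\left(\begin{smallmatrix}\mathbf{v}\\1\end{smallmatrix}\right)$. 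The goal is to show that the two linear matrix inequalities force $g(\mathbf{v}) \ge \mathbf{1}\{\|\mathbf{v}\|_2 \ge c_i\}$ for all $\mathbf{v}$.

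I would establish this in two steps. First, the constraint $\left(\begin{smallmatrix}P & \mathbf{q}\\\mathbf{q}^\top & r\end{smallmatrix}\right) \succeq 0$ immediately gives $g(\mathbf{v}) \ge 0$ everywhere, which handles the region $\|\mathbf{v}\|_2 < c_i$ where the indicator vanishes. Second, on the region $\|\mathbf{v}\|_2 \ge c_i$ the indicator equals one, so I must show $g(\mathbf{v}) \ge 1$ whenever $\mathbf{v}^\top\mathbf{v} - c_i^2 \ge 0$. This is exactly where the second LMI enters: applying the matrix inequality $\left(\begin{smallmatrix}I & \mathbf{0}\\\mathbf{0}^\top & -c_i^2\end{smallmatrix}\right) \preceq \lambda\left(\begin{smallmatrix}P & \mathbf{q}\\\mathbf{q}^\top & r-1\end{smallmatrix}\right)$ to the vector $(\mathbf{v}^\top,1)^\top$ on both sides yields $\mathbf{v}^\top\mathbf{v} - c_i^2 \le \lambda\big(g(\mathbf{v}) - 1\big)$. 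Since $\lambda \ge 0$ (and in fact $\lambda > 0$, because $\lambda = 0$ would force the infeasible $I \preceq 0$), this shows $g(\mathbf{v}) - 1 \ge 0$ on the desired region. This step is precisely the forward direction of the S-procedure for a single quadratic constraint.

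With the domination $g \ge \mathbf{1}\{\|\mathbf{v}\|_2 \ge c_i\}$ in hand, I would take expectations over any distribution with the prescribed moments: $\Pr\{\|\mathbf{v}\|_2 \ge c_i\} = \mathbb{E}\,\mathbf{1}\{\|\mathbf{v}\|_2 \ge c_i\} \le \mathbb{E}\,g(\mathbf{v})$, and using $\mathbb{E}[\mathbf{v}^\top P\mathbf{v}] = \mathrm{Trace}(P\,\mathbb{E}[\mathbf{v}\mathbf{v}^\top]) = \mathrm{Trace}(\Sigma P)$ together with $\mathbb{E}[\mathbf{v}] = \mathbf{a}$, the right side equals $\mathrm{Trace}(\Sigma P) + 2\mathbf{q}^\top\mathbf{a} + r$, which is the objective. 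Because this holds for every feasible quadruple, the probability is bounded by the minimum over all of them, which is the claim.

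Finally, I would point out what is \emph{not} needed here: since the statement asserts only an upper bound, only the (elementary) forward implication of the S-procedure is used. Proving that the bound is \emph{tight} --- that the minimum is attained by some admissible distribution --- would additionally require the converse of the S-lemma (valid by the strict feasibility of $\mathbf{v}^\top\mathbf{v} - c_i^2 > 0$) together with a moment-problem duality argument constructing a worst-case distribution. I expect that converse/tightness direction to be the only genuinely delicate point, but it is unnecessary for the inequality as stated.
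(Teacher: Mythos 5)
Your proposal is correct and follows essentially the same route as the paper's proof: dominate the indicator $\mathbf{1}\{\|\mathbf{v}\|_2\ge c_i\}$ by the quadratic $\left(\begin{smallmatrix}\mathbf{v}\\1\end{smallmatrix}\right)^\top\left(\begin{smallmatrix}P&\mathbf{q}\\\mathbf{q}^\top&r\end{smallmatrix}\right)\left(\begin{smallmatrix}\mathbf{v}\\1\end{smallmatrix}\right)$ using the first LMI for nonnegativity and the second for the value being at least $1$ on $\{\mathbf{v}^\top\mathbf{v}\ge c_i^2\}$, then take expectations and minimize over feasible quadruples. The only (cosmetic) difference is that where the paper invokes the S-procedure as an ellipsoid-containment equivalence, you obtain the needed implication directly by evaluating the second LMI at $(\mathbf{v}^\top,1)^\top$ and observing $\lambda>0$, which uses only the trivial direction of the S-lemma and is, if anything, slightly more self-contained.
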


The optimization problem (\ref{equ.chancebound}) is a semi-definite
programming, which is known be solved in polynomial time.
Furthermore, if we replace $\mathbb{E}(\mathbf{v}\mathbf{v}^\top)=
\Sigma$ by an inequality $\mathbb{E}(\mathbf{v}\mathbf{v}^\top)\leq
\Sigma$, the uniform bound still holds. Thus, even if our estimation
to the variance is not precise, we are still able to bound the
probability of having ``large'' disturbance.

\section{General Uncertainty Sets}\label{sec.generalization}
One reason the robust optimization formulation is powerful, is that having provided
the connection to Lasso, it then allows the opportunity to generalize to efficient
``Lasso-like'' regularization algorithms.

In this section, we make several generalizations of the robust
formulation~(\ref{equ.robustregression}) and derive counterparts of
Theorem~\ref{them.l2ci}. We generalize the robust formulation in two
ways: (a) to the case of arbitrary norm; and (b) to the case of
coupled uncertainty sets.

We first consider the case of
an arbitrary norm $\|\cdot\|_a$ of $\mathbb{R}^n$ as a cost function
rather than the squared loss.
The proof of the next theorem is identical to that of
Theorem~\ref{them.l2ci}, with only the $\ell^2$ norm changed to
$\|\cdot\|_a$.

\begin{theorem}\label{thm.generalnorm}
The robust regression problem
\[\min_{\mathbf{x}\in \mathbb{R}^m} \left\{\max_{\Delta A \in
\mathcal{U}_a}\|\mathbf{b}-(A+\Delta
A)\mathbf{x}\|_a\right\};\,\,\,\mathcal{U}_a\triangleq
\Big\{(\boldsymbol{\delta}_1,\cdots,\boldsymbol{\delta}_m)\Big|\|\boldsymbol{\delta}_i\|_a\leq
c_i,\,\,\,i=1,\cdots, m\Big\};\] is equivalent to the following
regularized regression problem
\[\min_{\mathbf{x}\in \mathbb{R}^m}\Big\{
\|\mathbf{b}-A\mathbf{x}\|_a+\sum_{i=1}^m c_i|x_i|\Big\}.\]
\end{theorem}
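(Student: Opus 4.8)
The plan is to mirror the proof of Theorem~\ref{them.l2ci} exactly, since the only structural feature of the $\ell^2$ norm that was used there is that it is a norm, satisfying the triangle inequality and absolute homogeneity. As before, I would fix $\mathbf{x}^*$ and establish the pointwise identity
\[
\max_{\Delta A\in \mathcal{U}_a}\|\mathbf{b}-(A+\Delta A)\mathbf{x}^*\|_a = \|\mathbf{b}-A\mathbf{x}^*\|_a + \sum_{i=1}^m c_i|x_i^*|,
\]
after which minimizing over $\mathbf{x}$ yields the theorem.

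For the upper bound, I would expand $(A+\Delta A)\mathbf{x}^* = A\mathbf{x}^* + \sum_{i=1}^m x_i^* \boldsymbol{\delta}_i$ and apply the triangle inequality to split off the disturbance term, then use homogeneity $\|x_i^*\boldsymbol{\delta}_i\|_a = |x_i^*|\,\|\boldsymbol{\delta}_i\|_a$ together with the constraint $\|\boldsymbol{\delta}_i\|_a\le c_i$. Both of these steps are identical to~(\ref{equ.proofsmall}) with $\|\cdot\|_2$ replaced by $\|\cdot\|_a$, giving $\max \le \|\mathbf{b}-A\mathbf{x}^*\|_a + \sum_i c_i|x_i^*|$.

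For the matching lower bound, the $\ell^2$ proof chose $\mathbf{u} = (\mathbf{b}-A\mathbf{x}^*)/\|\mathbf{b}-A\mathbf{x}^*\|_2$ and set $\boldsymbol{\delta}_i^* = -c_i\,\mathrm{sgn}(x_i^*)\mathbf{u}$, so that the disturbance term pointed in the same direction as the residual and the two magnitudes added. The one point that genuinely requires the right adaptation is the choice of $\mathbf{u}$: I need a unit-norm (in $\|\cdot\|_a$) vector $\mathbf{u}$ such that $\|(\mathbf{b}-A\mathbf{x}^*) + t\mathbf{u}\|_a = \|\mathbf{b}-A\mathbf{x}^*\|_a + t$ for the relevant $t = \sum_i c_i|x_i^*|\ge 0$. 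For a general norm the normalized residual no longer has this additive property automatically, so I would instead take $\mathbf{u}$ to achieve equality in the triangle inequality — concretely, choose $\mathbf{u}$ with $\|\mathbf{u}\|_a = 1$ and $\langle \mathbf{g},\mathbf{u}\rangle = 1$ where $\mathbf{g}$ is a dual-norm unit vector aligned with $\mathbf{b}-A\mathbf{x}^*$ (i.e.\ a supporting functional), which forces $\|\mathbf{b}-A\mathbf{x}^* + t\mathbf{u}\|_a = \|\mathbf{b}-A\mathbf{x}^*\|_a + t$. With this $\mathbf{u}$, setting $\boldsymbol{\delta}_i^* = -c_i\,\mathrm{sgn}(x_i^*)\mathbf{u}$ exactly as before produces a feasible $\Delta A^*\in\mathcal{U}_a$ and the chain of equalities in~(\ref{equ.prooflarge}) goes through verbatim.

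The main obstacle, then, is not any hard estimate but simply verifying that such a direction $\mathbf{u}$ exists for an arbitrary norm; this follows from Hahn--Banach (existence of a norming functional) or, more elementarily, from the fact that a nonzero vector always attains equality in the triangle inequality with some aligned unit vector, and the degenerate case $A\mathbf{x}^*=\mathbf{b}$ is handled by picking any unit-norm $\mathbf{u}$. Since every other step is a line-by-line substitution of $\|\cdot\|_a$ for $\|\cdot\|_2$, combining the two bounds closes the argument and minimizing over $\mathbf{x}$ completes the proof.
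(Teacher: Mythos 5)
Your proposal is correct, and it follows the paper's route: the paper itself gives no separate proof, stating only that the argument is identical to that of Theorem~\ref{them.l2ci} with $\|\cdot\|_2$ replaced by $\|\cdot\|_a$. The one place where you diverge is in fact a non-issue: you claim that ``for a general norm the normalized residual no longer has this additive property automatically,'' but it does. With $\mathbf{u}=(\mathbf{b}-A\mathbf{x}^*)/\|\mathbf{b}-A\mathbf{x}^*\|_a$ and $t=\sum_i c_i|x_i^*|\ge 0$, the perturbed residual $(\mathbf{b}-A\mathbf{x}^*)+t\mathbf{u}=\bigl(1+t/\|\mathbf{b}-A\mathbf{x}^*\|_a\bigr)(\mathbf{b}-A\mathbf{x}^*)$ is a nonnegative scalar multiple of the residual itself, so absolute homogeneity alone gives $\|(\mathbf{b}-A\mathbf{x}^*)+t\mathbf{u}\|_a=\|\mathbf{b}-A\mathbf{x}^*\|_a+t$; no alignment with a dual norming functional is needed. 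Your Hahn--Banach construction is valid (and the normalized residual is one admissible choice of such a $\mathbf{u}$), so the proof goes through either way, but the extra machinery is unnecessary and the paper's claim of a verbatim substitution is literally accurate.
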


We next remove the assumption that the disturbances are feature-wise
uncoupled. Allowing coupled uncertainty sets is useful when we have
some additional information about potential noise in the problem,
and we want to limit the conservativeness of the worst-case
formulation.
Consider the following uncertainty set:
\[\mathcal{U}'\triangleq \left\{(\boldsymbol{\delta}_1,\cdots, \boldsymbol{\delta}_m)\big| f_j(\|\boldsymbol{\delta}_1\|_a,\cdots, \|\boldsymbol{\delta}_m\|_a)\leq 0;\,\, j=1,\cdots, k \right\},\]
where $f_j(\cdot)$ are convex functions. Notice that, both $k$ and
$f_j$ can be arbitrary, hence  this is a very general formulation,
and provides us with significant flexibility in designing
uncertainty sets and equivalently new regression algorithms (see for
example Corollary~\ref{cor.generalnorm} and~\ref{cor.polytopeu}).
The following theorem converts this formulation to tractable
optimization problems. The proof is postponed to the appendix.
\begin{theorem}\label{thm.generaluncertainty}
Assume that the set \[\mathcal{Z}\triangleq\{\mathbf{z}\in
\mathbb{R}^m| f_j(\mathbf{z})\leq 0,\,\,j=1,\cdots, k;\,\,
\mathbf{z}\geq \mathbf{0}\}\] has non-empty relative interior. Then the
robust regression problem
\[\min_{\mathbf{x}\in \mathbb{R}^m} \left\{\max_{\Delta A \in
\mathcal{U}'}\|\mathbf{b}-(A+\Delta A)\mathbf{x}\|_a\right\}\] is
equivalent to the following regularized regression problem
\begin{equation}\label{equ.generaluncertainty}\begin{split}&\min_{\boldsymbol{\lambda}\in
\mathbb{R}^k_+,\boldsymbol{\kappa}\in \mathbb{R}^m_+, \mathbf{x}\in
\mathbb{R}^m}\Big\{
\|\mathbf{b}-A\mathbf{x}\|_a+v(\boldsymbol{\lambda},\boldsymbol{\kappa},
\mathbf{x})\Big\};\\&\mbox{where:}\,\,
v(\boldsymbol{\lambda},\boldsymbol{\kappa},\mathbf{x})\triangleq
\max_{\mathbf{c}\in
\mathbf{R}^m}\Big[(\boldsymbol{\kappa}+|\mathbf{x}|)^\top
\mathbf{c}-\sum_{j=1}^k\lambda_j
f_j(\mathbf{c})\Big]\end{split}\end{equation}
\end{theorem}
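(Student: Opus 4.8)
The plan is to follow the same worst-case-perturbation strategy as in Theorem~\ref{them.l2ci}, but now the inner maximization over $\Delta A\in\mathcal{U}'$ couples the feature disturbances, so after fixing $\mathbf{x}^*$ I will reduce the matrix-valued maximization to a scalar (vector-valued) problem and then dualize. First I would fix $\mathbf{x}^*$ and write the inner problem as $\max\{\|\mathbf{b}-A\mathbf{x}^*-\sum_i x_i^*\boldsymbol{\delta}_i\|_a : f_j(\|\boldsymbol{\delta}_1\|_a,\dots,\|\boldsymbol{\delta}_m\|_a)\le 0\}$. The key observation, exactly as in the uncoupled proof, is that the worst case aligns all the perturbations with the unit vector $\mathbf{u}=(\mathbf{b}-A\mathbf{x}^*)/\|\mathbf{b}-A\mathbf{x}^*\|_a$ (or any unit vector when $A\mathbf{x}^*=\mathbf{b}$); the triangle inequality gives the upper bound $\|\mathbf{b}-A\mathbf{x}^*\|_a+\sum_i |x_i^*|\,\|\boldsymbol{\delta}_i\|_a$, and this is attained by $\boldsymbol{\delta}_i=-c_i\,\mathrm{sgn}(x_i^*)\,\mathbf{u}$ where now $c_i=\|\boldsymbol{\delta}_i\|_a$ are the free scalar magnitudes. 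This collapses the inner problem to the scalar program $\max\{\|\mathbf{b}-A\mathbf{x}^*\|_a+\sum_i |x_i^*| c_i : f_j(\mathbf{c})\le 0,\ \mathbf{c}\ge\mathbf{0}\}$ over $\mathbf{c}=(c_1,\dots,c_m)\in\mathbb{R}^m$.

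Having reduced to this finite-dimensional convex maximization, the next step is Lagrangian duality. I would introduce multipliers $\boldsymbol{\lambda}\in\mathbb{R}^k_+$ for the constraints $f_j(\mathbf{c})\le0$ and $\boldsymbol{\kappa}\in\mathbb{R}^m_+$ for $\mathbf{c}\ge\mathbf{0}$, and form the Lagrangian so that the supremum over $\mathbf{c}$ produces precisely the term $v(\boldsymbol{\lambda},\boldsymbol{\kappa},\mathbf{x}^*)=\max_{\mathbf{c}\in\mathbb{R}^m}\big[(\boldsymbol{\kappa}+|\mathbf{x}^*|)^\top\mathbf{c}-\sum_j\lambda_j f_j(\mathbf{c})\big]$ appearing in the statement. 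The nonnegativity constraint $\mathbf{c}\ge\mathbf{0}$ is what lets the inner coefficient be written as $\boldsymbol{\kappa}+|\mathbf{x}^*|$ rather than just $|\mathbf{x}^*|$: the $\boldsymbol{\kappa}$ absorbs the sign constraint. Strong duality is what converts the inner $\max$ into an outer $\min$ over $(\boldsymbol{\lambda},\boldsymbol{\kappa})$, yielding $\|\mathbf{b}-A\mathbf{x}^*\|_a+\min_{\boldsymbol{\lambda}\ge0,\boldsymbol{\kappa}\ge0}v(\boldsymbol{\lambda},\boldsymbol{\kappa},\mathbf{x}^*)$.

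The main obstacle I expect is justifying strong duality for this scalar program, which is exactly where the Slater-type hypothesis enters: the assumption that $\mathcal{Z}=\{\mathbf{z}:f_j(\mathbf{z})\le0,\ \mathbf{z}\ge\mathbf{0}\}$ has nonempty relative interior. I would invoke a standard convex-duality theorem (e.g., the refined Slater condition that only requires relative-interior feasibility, since the $f_j$ and the sign constraints are convex) to certify zero duality gap and attainment of the dual optimum, so that the $\min$ is genuine rather than merely an $\inf$. A secondary care point is the degenerate case $A\mathbf{x}^*=\mathbf{b}$: there the direction $\mathbf{u}$ is not uniquely defined, but any unit vector works and the argument is unaffected, so I would note this in passing just as Theorem~\ref{them.l2ci} does.

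Finally, having established for each fixed $\mathbf{x}^*$ the identity
\[
\max_{\Delta A\in\mathcal{U}'}\|\mathbf{b}-(A+\Delta A)\mathbf{x}^*\|_a
=\|\mathbf{b}-A\mathbf{x}^*\|_a+\min_{\boldsymbol{\lambda}\ge0,\boldsymbol{\kappa}\ge0}v(\boldsymbol{\lambda},\boldsymbol{\kappa},\mathbf{x}^*),
\]
I would take the minimum over $\mathbf{x}$ on both sides. Since the right-hand side is itself a minimization over $(\boldsymbol{\lambda},\boldsymbol{\kappa})$, the two minimizations merge into the single joint minimization over $(\boldsymbol{\lambda},\boldsymbol{\kappa},\mathbf{x})$ displayed in~(\ref{equ.generaluncertainty}), completing the equivalence.
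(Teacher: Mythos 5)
Your proposal is correct and follows essentially the same route as the paper's proof: fix $\mathbf{x}^*$, collapse the inner maximization to the scalar concave program $\max_{\mathbf{c}\in\mathcal{Z}}\big\{\|\mathbf{b}-A\mathbf{x}^*\|_a+|\mathbf{x}^*|^\top\mathbf{c}\big\}$ (the paper does this by invoking Theorem~\ref{thm.generalnorm}, whereas you re-derive the alignment argument inline), then apply Lagrangian duality with multipliers $\boldsymbol{\lambda}$ and $\boldsymbol{\kappa}$ under the relative-interior Slater condition, and finally merge the minimizations over $(\boldsymbol{\lambda},\boldsymbol{\kappa},\mathbf{x})$.
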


{\bf Remark:} Problem~(\ref{equ.generaluncertainty}) is efficiently
solvable. Denote
$z^{\mathbf{c}}(\boldsymbol{\lambda},\boldsymbol{\kappa},\mathbf{x})\triangleq\Big[(\boldsymbol{\kappa}+|\mathbf{x}|)^\top
\mathbf{c}-\sum_{j=1}^k\lambda_j f_j(\mathbf{c})\Big]$. This is a
convex function of
$(\boldsymbol{\lambda},\boldsymbol{\kappa},\mathbf{x})$, and the
sub-gradient of $z^{\mathbf{c}}(\cdot)$ can be computed easily for
any $\mathbf{c}$. The function
$v(\boldsymbol{\lambda},\boldsymbol{\kappa},\mathbf{x})$ is the
maximum of a set of convex functions, $z^{\mathbf{c}}(\cdot)$ , hence is convex,
and satisfies
\[\partial
v(\boldsymbol{\lambda}^*,\boldsymbol{\kappa}^*,\mathbf{x}^*)=\partial
z^{\mathbf{c}_0}(\boldsymbol{\lambda}^*,\boldsymbol{\kappa}^*,\mathbf{x}^*),
\] where $\mathbf{c}_0$ maximizes $\Big[(\boldsymbol{\kappa}^*+|\mathbf{x}|^*)^\top
\mathbf{c}-\sum_{j=1}^k\lambda^*_j f_j(\mathbf{c})\Big]$. We
can efficiently evaluate $\mathbf{c}_0$ due to convexity of
$f_j(\cdot)$, and hence we can efficiently evaluate the
sub-gradient of $v(\cdot)$.

The next two
corollaries are a direct application of
Theorem~\ref{thm.generaluncertainty}.
\begin{corollary}\label{cor.generalnorm}
Suppose $\mathcal{U}'=\left\{(\boldsymbol{\delta}_1,\cdots,
\boldsymbol{\delta}_m)\Big|
\big\|\|\boldsymbol{\delta}_1\|_a,\cdots,
\|\boldsymbol{\delta}_m\|_a\big\|_s\leq l; \right\}$ for a symmetric
norm $\|\cdot\|_s$, then the resulting regularized regression
problem is
\[\min_{\mathbf{x}\in \mathbb{R}^m}\Big\{
\|\mathbf{b}-A\mathbf{x}\|_a+l\|
\mathbf{x}\|_s^*\Big\};\quad\mbox{where}\,\, \|\cdot\|_s^*
\,\mbox{is the dual norm of}\,\,\|\cdot\|_s.\]
\end{corollary}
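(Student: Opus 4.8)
The plan is to apply Theorem~\ref{thm.generaluncertainty} directly, taking the single constraint $k=1$ with $f_1(\mathbf{z}) = \big\|(z_1,\cdots,z_m)\big\|_s - l$, so that the uncertainty set $\mathcal{U}'$ coincides with the stated norm-ball on the feature-disturbance magnitudes. First I would verify the relative-interior hypothesis of Theorem~\ref{thm.generaluncertainty}: since $\|\cdot\|_s$ is a norm and $l>0$, the set $\mathcal{Z} = \{\mathbf{z}\geq\mathbf{0} \mid \|\mathbf{z}\|_s \leq l\}$ plainly contains, e.g., a small positive multiple of $\mathbf{1}_m$ in its relative interior, so the theorem applies.

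Next I would specialize the regularization term $v(\boldsymbol{\lambda},\boldsymbol{\kappa},\mathbf{x})$. With a single multiplier $\lambda\geq 0$ we have
\[
v(\lambda,\boldsymbol{\kappa},\mathbf{x}) = \max_{\mathbf{c}\in\mathbb{R}^m}\Big[(\boldsymbol{\kappa}+|\mathbf{x}|)^\top\mathbf{c} - \lambda\big(\|\mathbf{c}\|_s - l\big)\Big].
\]
The key computation is to evaluate the inner maximization and then minimize over $\lambda$ and $\boldsymbol{\kappa}$. Writing the bracket as $(\boldsymbol{\kappa}+|\mathbf{x}|)^\top\mathbf{c} - \lambda\|\mathbf{c}\|_s + \lambda l$, I recognize the $\mathbf{c}$-supremum as the conjugate of the positively homogeneous function $\lambda\|\cdot\|_s$. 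By the definition of the dual norm, $\sup_{\mathbf{c}}\big[\mathbf{y}^\top\mathbf{c} - \lambda\|\mathbf{c}\|_s\big]$ equals $0$ when $\|\mathbf{y}\|_s^* \leq \lambda$ and $+\infty$ otherwise. Hence finiteness forces $\|\boldsymbol{\kappa}+|\mathbf{x}|\|_s^* \leq \lambda$, and on that feasible region $v = \lambda l$.

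The final step is the outer minimization. Since $\boldsymbol{\kappa}\geq\mathbf{0}$ and we are minimizing $\lambda l$ subject to $\|\boldsymbol{\kappa}+|\mathbf{x}|\|_s^*\leq\lambda$, and since increasing $\boldsymbol{\kappa}$ only enlarges the left-hand side, the optimal choice is $\boldsymbol{\kappa}=\mathbf{0}$, giving the tightest constraint $\lambda \geq \||\mathbf{x}|\|_s^* = \|\mathbf{x}\|_s^*$ (using that $\|\cdot\|_s$ is symmetric, so its dual is sign-insensitive and $\||\mathbf{x}|\|_s^* = \|\mathbf{x}\|_s^*$). The minimizing $\lambda$ is then exactly $\|\mathbf{x}\|_s^*$, so $v = l\|\mathbf{x}\|_s^*$, which substituted into (\ref{equ.generaluncertainty}) yields the claimed objective $\|\mathbf{b}-A\mathbf{x}\|_a + l\|\mathbf{x}\|_s^*$.

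I expect the main obstacle to be the careful conjugate-duality bookkeeping in the second step: correctly identifying that the inner supremum is an indicator-type function (finite only on a dual-norm ball) rather than grinding out a closed form, and justifying the reduction $\boldsymbol{\kappa}=\mathbf{0}$ together with the symmetry fact $\||\mathbf{x}|\|_s^* = \|\mathbf{x}\|_s^*$. Everything else is routine specialization of the already-proved Theorem~\ref{thm.generaluncertainty}.
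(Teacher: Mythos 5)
Your proof is correct and is essentially the approach the paper intends: the paper gives no written proof of this corollary, declaring it ``a direct application'' of Theorem~\ref{thm.generaluncertainty}, and your specialization to $k=1$, $f_1(\mathbf{z})=\|\mathbf{z}\|_s-l$, the conjugate computation forcing $\|\boldsymbol{\kappa}+|\mathbf{x}|\|_s^*\le\lambda$, and the reduction to $\boldsymbol{\kappa}=\mathbf{0}$ via monotonicity of the dual of an absolute norm is exactly that application carried out in full. (A marginally shorter route is to use the intermediate identity $\max_{\mathbf{c}\in\mathcal{Z}}|\mathbf{x}|^\top\mathbf{c}$ from Equation~(\ref{equ.proofinmostgeneral}) in the theorem's proof and evaluate it directly as $l\|\mathbf{x}\|_s^*$, but this is the same argument with the Lagrangian step skipped.)
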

This corollary interprets {\em arbitrary} norm-based regularizers from a robust regression
perspective. For example, it is straightforward to show that if we
take both $\|\cdot\|_\alpha$ and $\|\cdot\|_s$ as the Euclidean
norm, then $\mathcal{U}'$ is the set of matrices with their
Frobenious norms bounded, and Corollary~\ref{cor.generalnorm}
reduces to the robust formulation introduced by \cite{Ghaoui97}.
\begin{corollary}\label{cor.polytopeu} Suppose $\mathcal{U}'=\left\{(\boldsymbol{\delta}_1,\cdots,
\boldsymbol{\delta}_m)\Big|\exists \mathbf{c}\geq \mathbf{0}:
T\mathbf{c}\leq \mathbf{s};\,\,  \|\boldsymbol{\delta}_j\|_a\leq
c_j; \right\}$, then the resulting regularized regression problem is
\begin{equation*}
\begin{split}
\mbox{Minimize:}\quad & \|\mathbf{b}-A\mathbf{x}\|_a
+\mathbf{s}^\top \boldsymbol{\lambda}\\
\mbox{Subject to:}\quad & \mathbf{x}\leq T^\top
\boldsymbol{\lambda}\\
&-\mathbf{x}\leq T^\top \boldsymbol{\lambda}\\
&\boldsymbol{\lambda}\geq \mathbf{0}.
\end{split}
\end{equation*}
\end{corollary}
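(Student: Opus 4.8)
The plan is to apply Theorem~\ref{thm.generaluncertainty} to the specific polytope-based uncertainty set $\mathcal{U}'$ defined in the corollary, and then simplify the abstract inner optimization $v(\boldsymbol{\lambda},\boldsymbol{\kappa},\mathbf{x})$ into an explicit linear program. First I would identify the convex constraint functions $f_j$ that describe $\mathcal{U}'$. The set is specified by the condition that there exists $\mathbf{c}\geq\mathbf{0}$ with $T\mathbf{c}\leq\mathbf{s}$ and $\|\boldsymbol{\delta}_j\|_a\leq c_j$. Writing this purely in terms of the feature-norm vector $\mathbf{z}=(\|\boldsymbol{\delta}_1\|_a,\dots,\|\boldsymbol{\delta}_m\|_a)$, the admissible set $\mathcal{Z}$ is exactly $\{\mathbf{z}\geq\mathbf{0}\mid T\mathbf{z}\leq\mathbf{s}\}$, so the convex constraints are the affine rows of $T\mathbf{z}-\mathbf{s}\leq\mathbf{0}$. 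I would then check the relative-interior hypothesis of Theorem~\ref{thm.generaluncertainty}, which here amounts to Slater-type feasibility of the polytope, and note it holds whenever the polytope has interior (a mild assumption I would state).

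Next I would substitute these affine $f_j$ into the definition $v(\boldsymbol{\lambda},\boldsymbol{\kappa},\mathbf{x})=\max_{\mathbf{c}}\big[(\boldsymbol{\kappa}+|\mathbf{x}|)^\top\mathbf{c}-\sum_j\lambda_j f_j(\mathbf{c})\big]$. With $f_j(\mathbf{c})=(T\mathbf{c}-\mathbf{s})_j$, the term $\sum_j\lambda_j f_j(\mathbf{c})=\boldsymbol{\lambda}^\top(T\mathbf{c}-\mathbf{s})=(T^\top\boldsymbol{\lambda})^\top\mathbf{c}-\mathbf{s}^\top\boldsymbol{\lambda}$, so the inner objective becomes linear in $\mathbf{c}$:
\[
v=\max_{\mathbf{c}}\Big[\big(\boldsymbol{\kappa}+|\mathbf{x}|-T^\top\boldsymbol{\lambda}\big)^\top\mathbf{c}\Big]+\mathbf{s}^\top\boldsymbol{\lambda}.
\]
The maximization of a linear function over $\mathbf{c}\in\mathbb{R}^m$ is the key step: it is finite (and equals $\mathbf{s}^\top\boldsymbol{\lambda}$) if and only if the linear coefficient vanishes, i.e. $\boldsymbol{\kappa}+|\mathbf{x}|-T^\top\boldsymbol{\lambda}=\mathbf{0}$, and is $+\infty$ otherwise. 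Since we are minimizing, the $+\infty$ case is automatically excluded, converting the unconstrained maximization into the equality constraint $\boldsymbol{\kappa}+|\mathbf{x}|=T^\top\boldsymbol{\lambda}$.

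Finally I would eliminate the auxiliary variable $\boldsymbol{\kappa}$. Because $\boldsymbol{\kappa}\geq\mathbf{0}$ and it appears only through the equality $\boldsymbol{\kappa}=T^\top\boldsymbol{\lambda}-|\mathbf{x}|$, feasibility of some $\boldsymbol{\kappa}\geq\mathbf{0}$ is equivalent to the inequality $|\mathbf{x}|\leq T^\top\boldsymbol{\lambda}$, which componentwise splits into $\mathbf{x}\leq T^\top\boldsymbol{\lambda}$ and $-\mathbf{x}\leq T^\top\boldsymbol{\lambda}$. Substituting back, the objective reduces to $\|\mathbf{b}-A\mathbf{x}\|_a+\mathbf{s}^\top\boldsymbol{\lambda}$, yielding exactly the stated linear program. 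The main obstacle I anticipate is the careful handling of the inner maximization: one must argue rigorously that the linear program over unconstrained $\mathbf{c}$ forces the coefficient to zero under the outer minimization, and that strong duality (guaranteed by the relative-interior assumption carried over from Theorem~\ref{thm.generaluncertainty}) legitimizes this reformulation rather than only giving an inequality. The rest is routine elimination of slack variables.
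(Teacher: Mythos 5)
Your proposal is correct and follows exactly the route the paper intends: the paper states this corollary as ``a direct application of Theorem~\ref{thm.generaluncertainty}'' without writing out the details, and your derivation—taking $f_j$ to be the affine rows of $T\mathbf{c}-\mathbf{s}\leq\mathbf{0}$, observing that the inner maximization over $\mathbf{c}$ is finite only when $\boldsymbol{\kappa}+|\mathbf{x}|=T^\top\boldsymbol{\lambda}$ with value $\mathbf{s}^\top\boldsymbol{\lambda}$, and eliminating $\boldsymbol{\kappa}\geq\mathbf{0}$ to obtain $\pm\mathbf{x}\leq T^\top\boldsymbol{\lambda}$—is precisely the computation being invoked. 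The only minor remark is that the relative-interior hypothesis is automatic for any non-empty polyhedron (and strong duality holds for affine constraints without strict feasibility), so no additional assumption beyond non-emptiness is actually needed.
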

Unlike previous results, this corollary considers general polytope
uncertainty sets. Advantages of such sets include the linearity of
the final formulation. Moreover, the modeling power is considerable,
as many interesting disturbances can be modeled in this way.

We briefly mention some further examples meant to illustrate the
power and flexibility of the robust formulation. We refer the
interested reader to \cite{LassoGerad} for full details.

As the results above indicate, the robust formulation can model a
broad class of uncertainties, and yield computationally tractable
(i.e., convex) problems. In particular, one can use the polytope
uncertainty discussed above, to show (see \cite{LassoGerad}) that by
employing an uncertainty set first used in \cite{Bertsimas04}, we
can model cardinality constrained noise, where some (unknown) subset
of at most $k$ features can be corrupted.

Another avenue one may take using robustness, and which is also
possible to solve easily, is the case where the uncertainty set
allows independent perturbation of the columns and the rows of the
matrix $A$. The resulting formulation resembles the elastic-net
formulation \cite{ZouHastie2003}, where there is a combination of
$\ell^2$ and $\ell^1$ regularization.

\section{Sparsity}\label{sec.sparse}
In this section, we investigate the sparsity properties of robust
regression~(\ref{equ.robustregression}), and equivalently Lasso.
Lasso's ability to recover sparse solutions has been extensively
studied and discussed~(cf \cite{Chen98,Feuer03,Candes06,Tropp04}).
There are generally two approaches. The first approach investigates
the problem from a statistical perspective. That is, it assumes that
the observations are generated by a (sparse) linear combination of
the features, and investigates the asymptotic or probabilistic
conditions required for Lasso to correctly recover the generative
model.
The second approach treats the problem from an optimization
perspective, and studies under what conditions a pair $(A,\,
\mathbf{b})$ defines a problem with sparse solutions (e.g.,
\cite{Tropp06}).

We follow the second approach and do not assume a generative model.
Instead, we consider the conditions that lead to a feature receiving
zero weight. Our first result paves the way for the remainder of
this section. We show in Theorem \ref{thm.nearlysame} that,
essentially, a feature receives no weight (namely, $x_i^*=0$) if
there exists an allowable perturbation of that feature which makes
it irrelevant. This result holds for general norm loss functions,
but in the $\ell^2$ case, we obtain further geometric results. For
instance, using Theorem~\ref{thm.nearlysame}, we show, among other
results, that ``nearly'' orthogonal features get zero weight
(Theorem~\ref{thm.incoherent}). Using similar tools, we provide
additional results in \cite{LassoGerad}. There, we show, among other
results, that the sparsity pattern of any optimal solution must
satisfy certain angular separation conditions between the residual
and the relevant features, and that ``nearly'' linearly dependent
features get zero weight.

Substantial research regarding sparsity properties of Lasso can be found in the literature
(cf \cite{Chen98,Feuer03,Candes06,Tropp04,Girosi98,Coifman92,Mallat93,Donoho06}
and many others). In particular, similar results as in point (a), that rely on an {\em incoherence} property, have been
established in, e.g., \cite{Tropp06}, and are used as standard tools in investigating sparsity of Lasso from the statistical
perspective. However, a proof exploiting robustness and properties of the uncertainty is novel. Indeed, such a
proof shows a fundamental connection between robustness and
sparsity, and implies that robustifying w.r.t. a feature-wise independent
uncertainty set might be a plausible way to achieve sparsity for
other problems.

To state the main theorem of this section, from which the other
results derive, we introduce some notation to facilitate the
discussion. Given a feature-wise uncoupled uncertainty set, ${\cal
U}$, an index subset $I \subseteq \{1,\dots,n\}$, and any $\Delta A
\in {\cal U}$, let $\Delta A^I$ denote the element of ${\cal U}$
that equals $\Delta A$ on each feature indexed by $i \in I$, and is
zero elsewhere. Then, we can write any element $\Delta A \in {\cal
U}$ as $\Delta A^I + \Delta A^{I^c}$ (where $I^c = \{1,\dots,n\}
\setminus I$). Then we have the following theorem. We note that the
result holds for any norm loss function, but we state and prove it
for the $\ell^2$ norm, since the proof for other norms is identical.

\begin{theorem}\label{thm.nearlysame}
The robust regression problem
\[\min_{\mathbf{x}\in
\mathbb{R}^m}\left\{\max_{\Delta A\in \mathcal{U}}
\|\mathbf{b}-(A+\Delta A)\mathbf{x}\|_2\right\},\] has a solution
supported on an index set $I$ if there exists some perturbation
$\Delta \tilde{A}^{I^c} \in \mathcal{U}$ of the features in $I^c$,
such that the robust regression problem
\[\min_{\mathbf{x}\in
\mathbb{R}^m}\left\{\max_{\Delta \tilde{A}^I \in \mathcal{U}^I}
\|\mathbf{b}-(A + \Delta \tilde{A}^{I^c}+\Delta
\tilde{A}^I)\mathbf{x}\|_2\right\},\] has a solution supported on
the set $I$.
\end{theorem}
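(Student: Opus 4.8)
The plan is to reduce both robust regression problems to their equivalent regularized forms via Theorem~\ref{them.l2ci}, and then compare the two regularized objectives pointwise. Applying Theorem~\ref{them.l2ci} to the original problem shows it is equivalent to minimizing $f(\mathbf{x})\triangleq\|\mathbf{b}-A\mathbf{x}\|_2+\sum_{i=1}^m c_i|x_i|$. For the second problem I would fix the perturbation $\Delta\tilde{A}^{I^c}$ and observe that the inner maximization ranges only over features indexed by $I$; applying Theorem~\ref{them.l2ci} to the shifted matrix $A+\Delta\tilde{A}^{I^c}$ with the uncertainty set that assigns radius $c_i$ to each $i\in I$ and radius $0$ to each $i\in I^c$, this problem is equivalent to minimizing $g(\mathbf{x})\triangleq\|\mathbf{b}-(A+\Delta\tilde{A}^{I^c})\mathbf{x}\|_2+\sum_{i\in I}c_i|x_i|$. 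Let $\mathbf{x}^*$ denote the minimizer of $g$ that is supported on $I$, which exists by hypothesis.

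Next I would establish two facts. First, $g(\mathbf{x})\le f(\mathbf{x})$ for every $\mathbf{x}$: writing $(A+\Delta\tilde{A}^{I^c})\mathbf{x}=A\mathbf{x}+\sum_{i\in I^c}x_i\boldsymbol{\delta}_i$ with $\|\boldsymbol{\delta}_i\|_2\le c_i$ and using the triangle inequality gives $\|\mathbf{b}-(A+\Delta\tilde{A}^{I^c})\mathbf{x}\|_2\le\|\mathbf{b}-A\mathbf{x}\|_2+\sum_{i\in I^c}c_i|x_i|$, and adding $\sum_{i\in I}c_i|x_i|$ to both sides yields $g\le f$. Second, $g(\mathbf{x}^*)=f(\mathbf{x}^*)$: since $x_i^*=0$ for $i\in I^c$, the perturbation term $\sum_{i\in I^c}x_i^*\boldsymbol{\delta}_i$ vanishes, so $\|\mathbf{b}-(A+\Delta\tilde{A}^{I^c})\mathbf{x}^*\|_2=\|\mathbf{b}-A\mathbf{x}^*\|_2$, and the two penalty sums agree because the omitted terms $\sum_{i\in I^c}c_i|x_i^*|$ are zero.

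Combining these, for any $\mathbf{x}$ I obtain $f(\mathbf{x})\ge g(\mathbf{x})\ge g(\mathbf{x}^*)=f(\mathbf{x}^*)$, where the middle inequality is optimality of $\mathbf{x}^*$ for $g$. Hence $\mathbf{x}^*$ minimizes $f$; since $\mathbf{x}^*$ is supported on $I$ and, by Theorem~\ref{them.l2ci}, minimizers of $f$ coincide with solutions of the original robust regression problem, the original problem has a solution supported on $I$.

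The main thing to get right is the \emph{direction} of the comparison: perturbing the off-support features can only decrease the regularized objective, because the triangle inequality bounds the perturbed residual from above by the unperturbed residual plus the corresponding feature penalties, giving $g\le f$; meanwhile equality holds exactly at the supported candidate $\mathbf{x}^*$, where the perturbation acts trivially. Identifying this $g\le f$ inequality together with the vanishing of the perturbation on the support of $\mathbf{x}^*$ is the crux; once both are in place the optimality chain is immediate. A minor point to verify carefully is the reduction of the second (partially perturbed) problem through Theorem~\ref{them.l2ci}, namely that fixing $\Delta\tilde{A}^{I^c}$ and maximizing only over $\Delta\tilde{A}^I$ is precisely the feature-wise uncoupled setting of Theorem~\ref{them.l2ci} applied to $A+\Delta\tilde{A}^{I^c}$ with zero radius assigned to the features in $I^c$.
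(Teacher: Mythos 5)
Your proof is correct, and it is essentially the paper's argument: both rest on showing that the partially perturbed objective is dominated by the original one everywhere, coincides with it at the candidate $\mathbf{x}^*$ supported on $I$, and then chaining these two facts with the optimality of $\mathbf{x}^*$. The only difference is the vehicle: you pass through Theorem~\ref{them.l2ci} and compare the two regularized objectives $f$ and $g$ via the triangle inequality, whereas the paper stays with the min-max formulations and obtains the same domination from a containment of the perturbed-matrix sets, which lets it state and prove the slightly more general Theorem~\ref{thm.nearlysame}$'$ (nonzero radii on $I^c$ in the restricted problem) and keeps the argument norm-independent without extra work.
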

Thus, a robust regression has an optimal solution supported on a set
$I$, if {\it any} perturbation of the features corresponding to the
complement of $I$ makes them irrelevant.
Theorem~\ref{thm.nearlysame} is a special case of the following
theorem with $c_j=0$ for all $j\not \in I$: \\ \\
{\bf Theorem \ref{thm.nearlysame}'}. {\it Let
$\mathbf{x}^*$ be an optimal solution of the robust regression
problem:
\[\min_{\mathbf{x}\in
\mathbb{R}^m}\left\{\max_{\Delta A\in \mathcal{U}}
\|\mathbf{b}-(A+\Delta A)\mathbf{x}\|_2\right\},\]
and let $I\subseteq \{1,\cdots, m\}$ be such that
$x_j^*=0$ $\forall \, j \not\in I$. Let \[\tilde{\mathcal{U}}\triangleq
\Big\{(\boldsymbol{\delta}_1,\cdots,\boldsymbol{\delta}_m)\Big|\|\boldsymbol{\delta}_i\|_2\leq
c_i,\,\,\,i \in I;\,\,\, \|\boldsymbol{\delta}_j\|_2\leq
c_j+l_j,\,\,j \not\in I\Big\}.\] Then, $\mathbf{x}^*$ is an optimal
solution of
\[\min_{\mathbf{x}\in
\mathbb{R}^m}\left\{\max_{\Delta A \in \tilde{\mathcal{U}}}
\|\mathbf{b}-(\tilde{A}+\Delta A)\mathbf{x}\|_2\right\},\]
for any $\tilde{A}$ that satisfies $\|\tilde{\mathbf{a}}_j - \mathbf{a}_j\|\leq l_j$
for $j \not\in I$, and $\tilde{\mathbf{a}}_i = \mathbf{a}_i$ for $i \in
I$.
}

\begin{proof}
Notice that
\begin{equation*}
\begin{split}
&\max_{\Delta A \in \tilde{\mathcal{U}}} \Big\|\mathbf{b}-(A+\Delta
A) \mathbf{x}^*\Big\|_2\\
=& \max_{\Delta A \in  \mathcal{U}} \Big\|\mathbf{b}-(A+\Delta
A) \mathbf{x}^*\Big\|_2\\
=& \max_{\Delta A \in \mathcal{U}}
\Big\|\mathbf{b}-(\tilde{A}+\Delta A) \mathbf{x}^*\Big\|_2.
\end{split}
\end{equation*}
These equalities hold because for $j \not\in I$, $x^*_j=0$, hence the
$j^{th}$ column of both $\tilde{A}$ and $\Delta A$ has no effect on the
residual.

For an arbitrary $\mathbf{x}'$, we have
\begin{equation*}
\begin{split}
&\max_{\Delta A \in \tilde{\mathcal{U}}} \Big\|\mathbf{b}-(A+\Delta
A)\mathbf{x}'\Big\|_2\\
\geq & \max_{\Delta A \in  \mathcal{U}}
\Big\|\mathbf{b}-(\tilde{A}+\Delta A)\mathbf{x}'\Big\|_2.
\end{split}
\end{equation*}
This is because,  $\|\mathbf{a}_j-\tilde{\mathbf{a}}_j\|\leq l_j$
for $j \not\in I$, and $\mathbf{a}_i=\tilde{\mathbf{a}}_i$ for $i \in
I$. Hence, we have
\[\big\{A+\Delta A\big|\Delta A \in
 \mathcal{U}\big\}\subseteq \big\{\tilde{A}+\Delta A\big| \Delta A\in
\tilde{\mathcal{U}}\big\}.\]

Finally, notice that
\[\max_{\Delta A \in  \mathcal{U}}
\Big\|\mathbf{b}-(A+\Delta A) \mathbf{x}^*\Big\|_2 \leq
\max_{\Delta A \in  \mathcal{U}} \Big\|\mathbf{b}-(A+\Delta
A)\mathbf{x}'\Big\|_2.\] Therefore we have \[\max_{\Delta A \in
\tilde{\mathcal{U}}} \Big\|\mathbf{b}-(\tilde{A}+\Delta A)
\mathbf{x}^*\Big\|_2\leq \max_{\Delta A \in \tilde{\mathcal{U}}}
\Big\|\mathbf{b}-(\tilde{A}+\Delta A)\mathbf{x}'\Big\|_2.\]Since this holds
for arbitrary $\mathbf{x}'$, we establish the theorem.
\end{proof}

We can interpret the result of this theorem by considering a
generative model\begin{footnote}{While we are not assuming
generative models to establish the results, it is still interesting
to see how these results can help in a generative model
setup.}\end{footnote} $b=\sum_{i\in I} w_i a_i+\tilde{\xi}$ where
$I\subseteq \{1\cdots,m\}$ and $\tilde{\xi}$ is a random variable,
i.e., $b$ is generated by features belonging to $I$. In this case,
for a feature $j \not\in I$, Lasso would assign zero weight as long
as there exists a perturbed value of this feature, such that the
optimal regression assigned it zero weight.

When we consider $\ell^2$ loss, we can translate the condition of a
feature being ``irrelevant'' into a geometric condition, namely,
orthogonality. We now use the result of Theorem~\ref{thm.nearlysame}
to show that robust regression has a sparse solution as long as an
incoherence-type property is satisfied. This result is more in line
with the traditional sparsity results, but we note that the
geometric reasoning is different, and ours is based on robustness.
Indeed, we show that a feature receives zero weight, if it is
``nearly'' (i.e., within an allowable perturbation) orthogonal to
the signal, and all relevant features.
\begin{theorem}\label{thm.incoherent}
Let $c_i=c$ for all $i$ and consider $\ell^2$ loss. If there exists
$I\subset \{1,\cdots, m\}$ such that for all $\mathbf{v}\in
\rm{span}\big(\{\mathbf{a}_i,i\in I\}\bigcup\{\mathbf{b}\}\big),\,
\|\mathbf{v}\|=1$, we have $\mathbf{v}^\top\mathbf{a}_j\leq c$,
$\forall j\not \in I$, then any optimal solution $\mathbf{x}^*$
satisfies $x^*_j=0$, $\forall j\not\in I$.
\end{theorem}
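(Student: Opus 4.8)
The plan is to use Theorem~\ref{thm.nearlysame}, whose hypothesis only asks for a \emph{single} admissible perturbation of the features in $I^c$ that renders them irrelevant; everything then reduces to reading the incoherence condition as a statement about an orthogonal projection. Let $S\triangleq\mathrm{span}\big(\{\mathbf{a}_i:i\in I\}\cup\{\mathbf{b}\}\big)$ and let $P$ be the orthogonal projector onto $S$. Since $\max_{\mathbf{v}\in S,\,\|\mathbf{v}\|_2=1}\mathbf{v}^\top\mathbf{a}_j=\|P\mathbf{a}_j\|_2$, the hypothesis says precisely that $\|P\mathbf{a}_j\|_2\le c$ for every $j\notin I$.

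First I would construct the perturbation: for each $j\notin I$ set $\boldsymbol{\delta}_j\triangleq-P\mathbf{a}_j$, so that $\|\boldsymbol{\delta}_j\|_2=\|P\mathbf{a}_j\|_2\le c=c_j$ and the perturbation is admissible, while the perturbed feature $\tilde{\mathbf{a}}_j=\mathbf{a}_j+\boldsymbol{\delta}_j=\mathbf{a}_j-P\mathbf{a}_j$ is orthogonal to $S$. Leaving the columns in $I$ untouched, I obtain a matrix $\tilde A$ in which $\mathbf{b}$ and every $\mathbf{a}_i$ ($i\in I$) lie in $S$ while every $\tilde{\mathbf{a}}_j$ ($j\notin I$) lies in $S^\perp$. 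By Theorem~\ref{them.l2ci}, the robust problem that perturbs only the $I$-columns is equivalent to minimizing $\|\mathbf{b}-\tilde A\mathbf{x}\|_2+c\sum_{i\in I}|x_i|$. Splitting the residual as $\mathbf{b}-\tilde A\mathbf{x}=\big(\mathbf{b}-\sum_{i\in I}x_i\mathbf{a}_i\big)-\sum_{j\notin I}x_j\tilde{\mathbf{a}}_j$ into its $S$- and $S^\perp$-parts, Pythagoras gives $\|\mathbf{b}-\tilde A\mathbf{x}\|_2^2=\|\mathbf{b}-\sum_{i\in I}x_i\mathbf{a}_i\|_2^2+\|\sum_{j\notin I}x_j\tilde{\mathbf{a}}_j\|_2^2$. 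Because the penalty ignores the $I^c$ coordinates, setting $x_j=0$ for all $j\notin I$ kills the nonnegative orthogonal term without changing the penalty, so the perturbed problem has a minimizer supported on $I$; Theorem~\ref{thm.nearlysame} then transfers this to the original problem.

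To obtain the stated conclusion that \emph{every} optimizer $\mathbf{x}^*$ has $x_j^*=0$ for $j\notin I$, I would argue directly on the equivalent objective $f(\mathbf{x})=\|\mathbf{b}-A\mathbf{x}\|_2+c\sum_i|x_i|$. Given an optimal $\mathbf{x}^*$, let $\mathbf{x}'$ agree with $\mathbf{x}^*$ on $I$ and vanish on $I^c$, write $\mathbf{r}'=\mathbf{b}-A\mathbf{x}'\in S$ and $\mathbf{w}=\sum_{j\notin I}x_j^*\mathbf{a}_j$, and decompose $\mathbf{w}=P\mathbf{w}+(\mathbf{w}-P\mathbf{w})$. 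Chaining Pythagoras, the reverse triangle inequality, and $\|P\mathbf{a}_j\|_2\le c$ yields $\|\mathbf{b}-A\mathbf{x}^*\|_2\ge\|\mathbf{r}'\|_2-\|P\mathbf{w}\|_2\ge\|\mathbf{b}-A\mathbf{x}'\|_2-c\sum_{j\notin I}|x_j^*|$, hence $f(\mathbf{x}^*)\ge f(\mathbf{x}')$, so $\mathbf{x}'$ is again optimal.

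I expect the main obstacle to be precisely this last upgrade from ``some optimizer'' to ``every optimizer.'' Forcing $x_j^*=0$ requires at least one inequality in the chain to be \emph{strict} whenever some $x_j^*\ne0$, and the bound $\|P\mathbf{w}\|_2\le c\sum_{j\notin I}|x_j^*|$ is strict exactly when $\|P\mathbf{a}_j\|_2<c$. The non-strict hypothesis leaves a genuine boundary case (for instance a feature with $\mathbf{a}_j\in S$ and $\|\mathbf{a}_j\|_2=c$ aligned with the residual admits optimal solutions with $x_j^*\ne0$), so I would either sharpen the incoherence condition to $\|P\mathbf{a}_j\|_2<c$ or invoke a non-degeneracy assumption (e.g.\ normalized features with $c<1$, which forces the orthogonal parts $\mathbf{a}_j-P\mathbf{a}_j$ to be nonzero and thereby makes the Pythagorean step strict); under either, the chain is strict and every optimal solution is supported on $I$.
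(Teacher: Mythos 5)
Your core construction is exactly the paper's: the paper projects each $\mathbf{a}_j$, $j\notin I$, onto $S=\mathrm{span}\big(\{\mathbf{a}_i, i\in I\}\cup\{\mathbf{b}\}\big)$, perturbs it by the negative of that projection (admissible because the hypothesis says precisely $\|P\mathbf{a}_j\|_2\le c$), notes that the perturbed features are orthogonal to $S$ so zeroing their coefficients cannot increase the objective of the reduced problem obtained from Theorem~\ref{them.l2ci}, and then invokes Theorem~\ref{thm.nearlysame}. Your first two paragraphs reproduce this argument step for step. What you add beyond the paper is worth keeping. First, your direct chain $f(\mathbf{x}^*)\ge f(\mathbf{x}')$ via Pythagoras and the reverse triangle inequality gives the existence of an optimizer supported on $I$ without routing through Theorem~\ref{thm.nearlysame} at all. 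Second, and more importantly, you have put your finger on a real defect: Theorem~\ref{thm.nearlysame} is purely an existence statement (``has \emph{a} solution supported on $I$''), whereas Theorem~\ref{thm.incoherent} asserts that \emph{every} optimal solution vanishes off $I$; the paper's proof silently elides this gap. Your boundary example --- $\mathbf{a}_j\in S$ with $\|\mathbf{a}_j\|_2=c$ aligned with the optimal residual, for which adding weight $t$ on feature $j$ decreases the loss by exactly $ct$ while increasing the penalty by $ct$ --- shows the ``every optimizer'' claim genuinely fails under the non-strict hypothesis $\mathbf{v}^\top\mathbf{a}_j\le c$. Your proposed remedies (strict inequality $\|P\mathbf{a}_j\|_2<c$, or a non-degeneracy condition forcing strictness somewhere in the chain) are the right way to recover the statement as written; as it stands, only the existence version is actually proved, both by you and by the paper.
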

\begin{proof} For $j\not \in I$, let $\mathbf{a}_j^=$ denote the
projection of $\mathbf{a}_j$ onto the span of $\{\mathbf{a}_i,\,\,
i\in I\}\bigcup\{\mathbf{b}\}$, and let $\mathbf{a}_j^+\triangleq
\mathbf{a}_j-\mathbf{a}_j^=$. Thus, we have $\|\mathbf{a}_j^=\|\leq
c$. Let $\hat{A}$ be such that
\[\hat{\mathbf{a}}_i=\left\{\begin{array}{ll} \mathbf{a}_i & i\in I;\\ \mathbf{a}_i^+ & i\not\in
I.\end{array}\right.\] Now let
\[\hat{\mathcal{U}}\triangleq \{(\boldsymbol{\delta}_1,\cdots,\boldsymbol{\delta}_m)|\|\boldsymbol{\delta}_i\|_2\leq c, \,i\in I; \|\boldsymbol{\delta}_j\|_2=0, \, j\not \in I \}.\]
Consider the robust regression problem
$\min_{\hat{\mathbf{x}}}\Big\{\max_{\Delta A \in \hat{\mathcal{U}}}
\big\|\mathbf{b}-(\hat{A}+\Delta A)\hat{\mathbf{x}}\big\|_2\Big\}$,
which is equivalent to $\min_{\hat{\mathbf{x}}}\Big\{
\big\|\mathbf{b}-\hat{A}\hat{\mathbf{x}}\big\|_2+\sum_{i\in I}
c|\hat{x}_i|\big\}$. Note that the $\hat{\mathbf{a}}_j$ are orthogonal to
the span of $\{\hat{\mathbf{a}}_i,\,\, i\in
I\}\bigcup\{\mathbf{b}\}$. Hence for any given $\hat{\mathbf{x}}$,
by changing $\hat{x}_j$ to zero for all $j\not \in I$, the
minimizing objective does not increase.

Since $\|\hat{\mathbf{a}}-\hat{\mathbf{a}}_j\|=\|\mathbf{a}_j^=\|\leq c$
$\forall j\not\in I$, (and recall that
$\mathcal{U}=\{(\boldsymbol{\delta}_1,\cdots,\boldsymbol{\delta}_m)|\|\boldsymbol{\delta}_i\|_2\leq
c, \forall i\}$) applying Theorem~\ref{thm.nearlysame} concludes the proof.
\end{proof}

\section{Density Estimation and Consistency}\label{sec.density}
In this section, we investigate the robust linear regression
formulation from a statistical perspective and rederive {\it using
only robustness properties} that Lasso is asymptotically consistent.
The basic idea of the consistency proof is as follows. We show that
the robust optimization formulation can be seen to be the maximum
error w.r.t. a class of probability measures. This class includes a
kernel density estimator, and using this, we show that Lasso is
consistent.

\subsection{Robust Optimization, Worst-case Expected Utility and Kernel Density Estimator}
In this subsection, we present some notions and intermediate
results. In particular, we link a robust optimization formulation
with a worst expected utility (w.r.t. a class of probability
measures); we then briefly recall the definition of a kernel density
estimator. Such results will be used in establishing the consistency
of Lasso, as well as providing some additional insights on robust
optimization. Proofs are postponed to the appendix.

We first establish a general result on the equivalence between a
robust optimization formulation and a worst-case expected utility:
\begin{proposition}\label{prop.pointvsdist}Given a function $g:\mathbb{R}^{m+1}\rightarrow \mathbb{R}$
and Borel sets $\mathcal{Z}_1,\cdots, \mathcal{Z}_n\subseteq
\mathbb{R}^{m+1}$, let
\[\mathcal{P}_n\triangleq \{\mu\in \mathcal{P}| \forall S\subseteq \{1,\cdots, n\}: \mu(\bigcup_{i\in S} \mathcal{Z}_i)\geq |S|/n \}.\]
The following holds
\[\frac{1}{n}\sum_{i=1}^n \sup_{(\mathbf{r}_i,b_i)\in \mathcal{Z}_i} h(\mathbf{r}_i,b_i) =\sup_{\mu\in \mathcal{P}_n}\int_{\mathbb{R}^{m+1}} h(\mathbf{r},b)d\mu(\mathbf{r},b). \]
\end{proposition}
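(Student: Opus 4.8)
The plan is to prove the claimed identity by establishing the two inequalities separately. Write $M_i \triangleq \sup_{(\mathbf{r}_i,b_i)\in\mathcal{Z}_i} h(\mathbf{r}_i,b_i)$, so that the left-hand side is $\frac{1}{n}\sum_{i=1}^n M_i$ and the right-hand side is $\sup_{\mu\in\mathcal{P}_n}\int h\,d\mu$. I would treat $h$ (written $g$ in the hypothesis) as the integrand appearing in the displayed equation, assumed Borel measurable, with each $\mathcal{Z}_i$ nonempty (otherwise $\mathcal{P}_n=\emptyset$ and both sides are $-\infty$).

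For the inequality $\sup_{\mu\in\mathcal{P}_n}\int h\,d\mu \ge \frac{1}{n}\sum_i M_i$ (the easy direction), I would exhibit near-optimal measures explicitly. Fix $\epsilon>0$ and for each $i$ choose $(\mathbf{r}_i^\epsilon,b_i^\epsilon)\in\mathcal{Z}_i$ with $h(\mathbf{r}_i^\epsilon,b_i^\epsilon)\ge M_i-\epsilon$ (with the obvious modification when $M_i=+\infty$, in which case both sides are infinite). Set $\mu_\epsilon\triangleq\frac{1}{n}\sum_{i=1}^n\delta_{(\mathbf{r}_i^\epsilon,b_i^\epsilon)}$. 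The key check is that $\mu_\epsilon\in\mathcal{P}_n$: for any $S\subseteq\{1,\dots,n\}$, each index $j\in S$ contributes a point lying in $\mathcal{Z}_j\subseteq\bigcup_{i\in S}\mathcal{Z}_i$, hence $\mu_\epsilon(\bigcup_{i\in S}\mathcal{Z}_i)\ge |S|/n$. Then $\int h\,d\mu_\epsilon=\frac{1}{n}\sum_i h(\mathbf{r}_i^\epsilon,b_i^\epsilon)\ge\frac{1}{n}\sum_i M_i-\epsilon$, and letting $\epsilon\downarrow 0$ gives the bound.

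For the reverse inequality $\int h\,d\mu \le \frac{1}{n}\sum_i M_i$ for every $\mu\in\mathcal{P}_n$ (the hard direction), the plan is to decompose $\mu$ into $n$ pieces, each carried by one of the $\mathcal{Z}_i$. Concretely, I claim the defining constraints of $\mathcal{P}_n$ guarantee nonnegative measures $\mu_1,\dots,\mu_n$ with $\mu=\sum_{i=1}^n\mu_i$, $\mu_i(\mathcal{Z}_i^c)=0$, and $\mu_i(\mathbb{R}^{m+1})=1/n$. Granting this, since $h\le M_i$ on the support of $\mu_i$, \[\int h\,d\mu=\sum_{i=1}^n\int h\,d\mu_i\le\sum_{i=1}^n M_i\,\mu_i(\mathbb{R}^{m+1})=\frac{1}{n}\sum_{i=1}^n M_i,\] and taking the supremum over $\mu\in\mathcal{P}_n$ finishes the argument. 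Combining the two directions yields equality.

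The main obstacle is justifying the decomposition, which is exactly a measure-theoretic (continuous) Hall marriage / transportation-feasibility statement: view the problem as routing a unit of mass split evenly among $n$ sinks, where sink $i$ may draw only from the region $\mathcal{Z}_i$, and observe that $\mu(\bigcup_{i\in S}\mathcal{Z}_i)\ge|S|/n$ for all $S$ is precisely the Hall/Gale feasibility condition for a complete feasible flow. Equivalently, one seeks a coupling $\pi$ on $\{1,\dots,n\}\times\mathbb{R}^{m+1}$ with uniform first marginal, second marginal $\mu$, and support contained in $\{(i,z):z\in\mathcal{Z}_i\}$, and then sets $\mu_i\triangleq\pi(\{i\}\times\cdot)$; note that the constraint for $S=\{1,\dots,n\}$ forces $\mu(\bigcup_i\mathcal{Z}_i)=1$ so that $\mu$ is entirely routable. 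I would invoke the measure-theoretic marriage theorem (or derive it via a max-flow/min-cut argument), with the remaining care being measurability and integrability bookkeeping and the degenerate $M_i=+\infty$ cases already handled above.
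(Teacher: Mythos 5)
Your proposal is correct, and the easy direction (empirical measure on $\epsilon$-optimal points, verified to lie in $\mathcal{P}_n$) is exactly what the paper does. For the hard direction, however, you take a genuinely different route. The paper avoids any decomposition of $\mu$: it orders the $\epsilon$-suprema so that $f(\hat{\mathbf{x}}_1)\leq\cdots\leq f(\hat{\mathbf{x}}_n)$, replaces $f$ by the simple function $\hat{f}(\mathbf{x})=\min_{i:\mathbf{x}\in\mathcal{Z}_i}f(\hat{\mathbf{x}}_i)$ on $\bigcup_j\mathcal{Z}_j$, writes $\int\hat{f}\,d\mu=\sum_k f(\hat{\mathbf{x}}_k)\alpha_k$ with $\alpha_k=\mu(\bigcup_{i\leq k}\mathcal{Z}_i)-\mu(\bigcup_{i\leq k-1}\mathcal{Z}_i)$, and then uses the constraints $\sum_{k\leq t}\alpha_k\geq t/n$ together with monotonicity of the values in an Abel-summation/majorization step to get $\sum_k\alpha_k f(\hat{\mathbf{x}}_k)\leq\frac{1}{n}\sum_k f(\hat{\mathbf{x}}_k)$. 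You instead split $\mu$ into sub-measures $\mu_i$ of mass $1/n$ carried by $\mathcal{Z}_i$, invoking a measure-theoretic Hall/transportation feasibility theorem; the constraint defining $\mathcal{P}_n$ is indeed precisely the Hall condition, and the lemma can be justified rigorously by restricting attention to the (at most $2^n$) atoms of the algebra generated by $\mathcal{Z}_1,\dots,\mathcal{Z}_n$ and solving a finite max-flow problem, so this is not a gap, merely an outsourced step. Your version buys a one-line conclusion ($\int h\,d\mu_i\leq M_i/n$) and a cleaner conceptual picture of why the class $\mathcal{P}_n$ is the right one, at the cost of a heavier auxiliary lemma; the paper's version is more elementary and self-contained, essentially verifying the same feasibility greedily along the nested chain $\bigcup_{i\leq k}\mathcal{Z}_i$ rather than in full generality.
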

This leads to the following corollary for Lasso, which states that
for a given $\mathbf{x}$, the robust regression loss over the
training data is equal to the worst-case expected {\it
generalization error}.
\begin{corollary}\label{cor.distributionalrobust}Given $\mathbf{b}\in \mathbb{R}^n$, $A\in \mathbb{R}^{n\times m}$, the following equation holds for any $\mathbf{x}\in
\mathbb{R}^m$, \begin{equation}\label{equ.distributionrobust}
\|\mathbf{b}-A\mathbf{x}\|_2  +\sqrt{n}c_n\|\mathbf{x}\|_1+
\sqrt{n}c_n =\sup_{\mu\in\hat{\mathcal{P}}(n)}
\sqrt{n\int_{\mathbb{R}^{m+1}}
(b'-\mathbf{r}'^\top\mathbf{x})^2d\mu(\mathbf{r}',b')}.\end{equation}Here\begin{footnote}{Recall
that $a_{ij}$ is the $j^{th}$ element of
$\mathbf{r}_i$}\end{footnote},
\begin{equation*}
\begin{split}\hat{\mathcal{P}}(n)&\triangleq
\bigcup_{\|\boldsymbol{\sigma}\|_2\leq\sqrt{n}c_n;\,\forall i:
\|\boldsymbol{\delta}_i\|_2 \leq \sqrt{n}c_n}\mathcal{P}_n(A,
\Delta, \mathbf{b},
\boldsymbol{\sigma});\\
\mathcal{P}_n (A, \Delta, \mathbf{b},
\boldsymbol{\sigma})&\triangleq \{\mu\in \mathcal{P} |
\mathcal{Z}_i=[b_i-\sigma_i,b_i+\sigma_i]\times\prod_{j=1}^m
[a_{ij}-\delta_{ij}, a_{ij}+\delta_{ij}];\\ &\quad\quad\forall
S\subseteq \{1,\cdots, n\}: \mu(\bigcup_{i\in S} \mathcal{Z}_i)\geq
|S|/n \}.\end{split}\end{equation*}
\end{corollary}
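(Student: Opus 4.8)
The plan is to reduce the claimed identity to Proposition~\ref{prop.pointvsdist} applied with $h(\mathbf{r},b)\triangleq (b-\mathbf{r}^\top\mathbf{x})^2$, and then to evaluate the resulting worst-case box maximization explicitly. Write $\mathbf{e}\triangleq \mathbf{b}-A\mathbf{x}$ with entries $e_i = b_i - \mathbf{r}_i^\top\mathbf{x}$. Since $\hat{\mathcal{P}}(n)$ is a union over all admissible $(\boldsymbol{\sigma},\Delta)$ with $\|\boldsymbol{\sigma}\|_2\le\sqrt{n}c_n$ and $\|\boldsymbol{\delta}_j\|_2\le \sqrt{n}c_n$, and since the square root is monotone, the supremum on the right-hand side splits as
\[
\sup_{\mu\in\hat{\mathcal{P}}(n)} n\int_{\mathbb{R}^{m+1}} h\, d\mu = \sup_{(\boldsymbol{\sigma},\Delta)}\; n \sup_{\mu\in\mathcal{P}_n(A,\Delta,\mathbf{b},\boldsymbol{\sigma})}\int h\,d\mu .
\]
For each fixed $(\boldsymbol{\sigma},\Delta)$ the inner supremum is exactly the setting of Proposition~\ref{prop.pointvsdist}, so it equals $\frac{1}{n}\sum_{i=1}^n \sup_{(\mathbf{r}',b')\in\mathcal{Z}_i} h(\mathbf{r}',b')$; the factor $n$ cancels, and the right-hand side of the corollary becomes $\sqrt{\sup_{(\boldsymbol{\sigma},\Delta)}\sum_{i=1}^n \sup_{\mathcal{Z}_i} h}$.

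The second step is the per-box maximization. Each $\mathcal{Z}_i$ is a product of intervals, and $b'-\mathbf{r}'^\top\mathbf{x}$ is affine on it, so its absolute value is maximized at a vertex, giving $\sup_{\mathcal{Z}_i}|b'-\mathbf{r}'^\top\mathbf{x}| = |e_i| + \sigma_i + \sum_{j=1}^m \delta_{ij}|x_j|$ (taking the half-widths $\sigma_i,\delta_{ij}\ge 0$). Hence $\sup_{\mathcal{Z}_i} h = \big(|e_i|+\sigma_i+\sum_j \delta_{ij}|x_j|\big)^2$, and collecting the index $i$ into a vector, $\sum_i\sup_{\mathcal{Z}_i}h = \big\| \, |\mathbf{e}| + \boldsymbol{\sigma} + \sum_{j=1}^m |x_j|\,\boldsymbol{\delta}_j \, \big\|_2^2$, where $|\mathbf{e}|$ is the entrywise absolute value of $\mathbf{e}$ and $\boldsymbol{\delta}_j$ is the $j$-th disturbance column.

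The final and most delicate step is to maximize $\big\| \, |\mathbf{e}| + \boldsymbol{\sigma} + \sum_j |x_j|\,\boldsymbol{\delta}_j \, \big\|_2$ over $\|\boldsymbol{\sigma}\|_2\le\sqrt{n} c_n$ and $\|\boldsymbol{\delta}_j\|_2\le\sqrt{n} c_n$. The triangle inequality gives the upper bound $\|\mathbf{e}\|_2 + \sqrt{n} c_n + \sqrt{n} c_n\sum_j|x_j| = \|\mathbf{b}-A\mathbf{x}\|_2 + \sqrt{n} c_n\|\mathbf{x}\|_1 + \sqrt{n} c_n$. For the matching lower bound, fix a unit vector $\mathbf{u}$ equal to $|\mathbf{e}|/\||\mathbf{e}|\|_2$ (or any nonnegative unit vector when $\mathbf{e}=\mathbf{0}$) and set $\boldsymbol{\sigma}=\sqrt{n} c_n\,\mathbf{u}$ and $\boldsymbol{\delta}_j=\sqrt{n} c_n\,\mathbf{u}$ for every $j$; all terms then align along $\mathbf{u}$ and the norm attains the bound. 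Squaring, then taking the square root of the whole expression, and comparing with the left-hand side establishes the corollary.

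This last argument is exactly the vector-space version of the tightness construction already used in the proof of Theorem~\ref{them.l2ci}; indeed one may alternatively read the whole identity as Theorem~\ref{them.l2ci} applied to the matrix $A$ augmented by the column $\mathbf{b}$ (with fixed coefficient $1$), which accounts for the extra $\sqrt{n} c_n$ term. I expect the only genuine care to be needed in justifying the interchange in the displayed equation (a supremum over a union, together with monotonicity of $\sqrt{\cdot}$) and in the degenerate case $\mathbf{e}=\mathbf{0}$, both of which are routine.
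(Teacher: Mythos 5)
Your proof is correct and takes essentially the same route as the paper's: both decompose the supremum over the union $\hat{\mathcal{P}}(n)$ into an outer supremum over $(\boldsymbol{\sigma},\Delta)$, apply Proposition~\ref{prop.pointvsdist} box-by-box, and reduce the remaining maximization to the worst-case norm-bounded column perturbation of the augmented matrix $[\mathbf{b},A]$. The only difference is presentational: you carry out the per-box evaluation and the triangle-inequality/alignment tightness step explicitly, whereas the paper invokes ``the equivalence to the robust formulation'' (Theorem~\ref{them.l2ci}) for that step --- an equivalence you yourself point out at the end.
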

\begin{remark}{\rm We briefly explain
Corollary~\ref{cor.distributionalrobust} to avoid possible
confusions. Equation~(\ref{equ.distributionrobust}) is a
non-probabilistic equality. That is, it holds without any assumption
(e.g., i.i.d. or generated by certain distributions) on $\mathbf{b}$
and $A$. And it does not involve any probabilistic operation such as
taking expectation on the left-hand-side, instead, it is an
equivalence relationship which hold for an arbitrary set of samples.
Notice that, the right-hand-side also depends on the samples since
$\hat{\mathcal{P}}(n)$ is defined through $A$ and $\mathbf{b}$.
Indeed, $\hat{\mathcal{P}}(n)$ represents the union of classes of
distributions
$\mathcal{P}_n(A,\Delta,\mathbf{b},\boldsymbol{\sigma})$ such that
the norm of each column of $\Delta$ is bounded, where
$\mathcal{P}_n(A,\Delta,\mathbf{b},\boldsymbol{\sigma})$ is the set
of distributions corresponds to (see
Proposition~\ref{prop.pointvsdist}) disturbance in hyper-rectangle
Borel sets $\mathcal{Z}_1,\cdots,\mathcal{Z}_n$ centered at $(b_i,
\mathbf{r}_i^\top)$ with lengths $(2\sigma_i, 2\delta_{i1},\cdots,
2\delta_{im})$. }
\end{remark}

We will later show that $\hat{P}_n$ consists a kernel density
estimator. Hence we recall here its definition.
 The {\em kernel density estimator} for a
density $\hat{h}$ in $\mathbb{R}^d$, originally proposed in
\cite{Rosenblatt56,Parzen62}, is defined by
\[h_n(\mathbf{x})=(nc_n^{d})^{-1}\sum_{i=1}^nK\left(\frac{\mathbf{x}-\hat{\mathbf{x}}_i}{c_n}\right),\]
where $\{c_n\}$ is a sequence of positive numbers,
$\hat{\mathbf{x}}_i$ are i.i.d. samples generated according to
$\hat{f}$, and $K$ is a Borel measurable function (kernel)
satisfying $K\geq 0$, $\int K=1$. See \cite{Devroye85,Scott92} and
the reference therein for detailed discussions.
Figure~\ref{fig.kerneldensity} illustrates a kernel density
estimator using Gaussian kernel for a randomly generated sample-set.
A celebrated property of a kernel density estimator is that it
converges in $\mathcal{L}^1$ to $\hat{h}$ when $c_n\downarrow 0$ and
$nc_n^{d}\uparrow \infty$ \cite{Devroye85}.
\begin{figure}[t]
\begin{center}
  \includegraphics[height=5cm, width=1\linewidth]{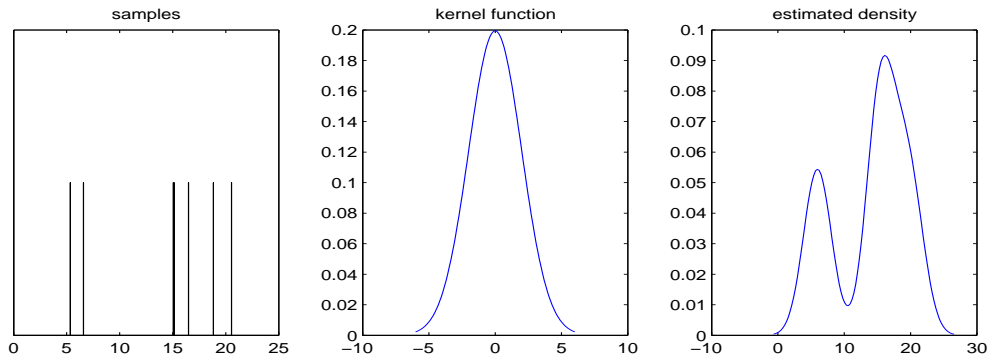}
\caption{Illustration of Kernel Density Estimation.}
\label{fig.kerneldensity}
\end{center}
\end{figure}

\subsection{Consistency of Lasso}
We restrict our discussion to the case where the magnitude of the
allowable uncertainty for all features equals $c$, (i.e., the
standard Lasso) and establish the statistical consistency of Lasso
from  a distributional robustness argument. Generalization to
the non-uniform case is straightforward.
Throughout, we use $c_n$ to represent $c$ where there are $n$ samples
(we take $c_n$ to zero).

Recall the standard generative model in statistical
learning: let $\mathbb{P}$ be a probability measure
with bounded support that generates i.i.d samples $(b_i,
\mathbf{r}_i)$, and has a density $f^*(\cdot)$. Denote the set of
the first $n$ samples by $\mathcal{S}_n$. Define
\begin{equation*}\begin{split}\mathbf{x}(c_n, \mathcal{S}_n)&\triangleq\arg\min_{\mathbf{x}}
\Big\{\sqrt{\frac{1}{n}\sum_{i=1}^n (b_i-\mathbf{r}_i^\top
\mathbf{x})^2}+c_n\|x\|_1\Big\}\\&=\arg\min_{\mathbf{x}}
\Big\{\frac{\sqrt{n}}{n}\sqrt{\sum_{i=1}^n (b_i-\mathbf{r}_i^\top
\mathbf{x})^2}+c_n\|x\|_1\Big\};\\
\mathbf{x}(\mathbb{P})&\triangleq\arg\min_{\mathbf{x}}
\Big\{\sqrt{\int_{b,\mathbf{r}} (b-\mathbf{r}^\top \mathbf{x})^2
d\mathbb{P}(b, \mathbf{r})}\Big\}.\end{split}\end{equation*} In
words, $\mathbf{x}(c_n, \mathcal{S}_n)$ is the solution to Lasso
with the tradeoff parameter set to $c_n\sqrt{n}$, and
$\mathbf{x}(\mathbb{P})$ is the ``true'' optimal solution. We have
the following consistency result. The theorem itself is a well-known
result. However, the proof technique is novel. This technique is of
interest because the standard techniques to establish consistency in
statistical learning including Vapnik-Chervonenkis (VC) dimension
(e.g., \cite{Vapnik91}) and algorithmic stability (e.g.,
\cite{Bousquet02}) often work for a limited range of algorithms,
e.g., the $k$-Nearest Neighbor is known to have infinite VC
dimension, and we show in Section \ref{sec.stability} that {\it
Lasso is not stable}. In contrast,
 a much wider range of algorithms have
robustness interpretations, allowing a unified approach to prove
their consistency.
\begin{theorem}\label{thm.consistencyforbounded}Let $\{c_n\}$
be such that $c_n\downarrow 0$ and $\lim_{n\rightarrow\infty} n
(c_n)^{m+1} =\infty$. Suppose there exists a constant $H$ such that
$\|\mathbf{x}(c_n, \mathcal{S}_n)\|_2\leq H$. Then,
\[\lim_{n\rightarrow \infty} \sqrt{\int_{b,\mathbf{r}} (b-\mathbf{r}^\top \mathbf{x}(c_n, \mathcal{S}_n))^2 d\mathbb{P}(b,
\mathbf{r})}=\sqrt{\int_{b,\mathbf{r}} (b-\mathbf{r}^\top
\mathbf{x}(\mathbb{P}))^2 d\mathbb{P}(b, \mathbf{r})},\] almost
surely.
\end{theorem}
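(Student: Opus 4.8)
The plan is to sandwich the generalization error $E_{\mathbb{P}}(\hat{\mathbf{x}}_n) := \sqrt{\int (b-\mathbf{r}^\top \hat{\mathbf{x}}_n)^2\,d\mathbb{P}}$ of the Lasso estimator $\hat{\mathbf{x}}_n := \mathbf{x}(c_n,\mathcal{S}_n)$ between $E_{\mathbb{P}}(\mathbf{x}(\mathbb{P}))$ and a quantity converging to it. The lower bound is immediate: since $\mathbf{x}(\mathbb{P})$ is by definition the minimizer of $E_{\mathbb{P}}(\cdot)$, we have $E_{\mathbb{P}}(\hat{\mathbf{x}}_n)\geq E_{\mathbb{P}}(\mathbf{x}(\mathbb{P}))$ for every $n$, so $\liminf_n E_{\mathbb{P}}(\hat{\mathbf{x}}_n)\geq E_{\mathbb{P}}(\mathbf{x}(\mathbb{P}))$ trivially. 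All the work is in the matching $\limsup$ bound.

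For the upper bound I would exploit Corollary~\ref{cor.distributionalrobust}. Dividing that identity by $\sqrt{n}$ shows that the (shifted) Lasso objective equals the worst-case error $W_n(\mathbf{x}):=\sup_{\mu\in\hat{\mathcal{P}}(n)}\sqrt{\int (b-\mathbf{r}^\top\mathbf{x})^2\,d\mu}$, so that $W_n(\mathbf{x})=E_n(\mathbf{x})+c_n\|\mathbf{x}\|_1+c_n$, where $E_n$ is the empirical RMS error. The key structural observation is that $\hat{\mathcal{P}}(n)$ contains a bona fide \emph{kernel density estimator} $\hat{f}_n$: taking each bump to be the uniform density on the hyper-rectangle $\mathcal{Z}_i$, with half-widths chosen compatible with the norm constraints $\|\boldsymbol{\sigma}\|_2,\|\boldsymbol{\delta}_i\|_2\leq\sqrt{n}c_n$ (e.g.\ every half-width equal to $c_n$, giving effective bandwidth of order $c_n$), and assigning mass $1/n$ to each bump, one verifies the defining constraint of Proposition~\ref{prop.pointvsdist}, namely $\mu(\bigcup_{i\in S}\mathcal{Z}_i)\geq |S|/n$, because each bump $j\in S$ deposits its full mass $1/n$ inside $\mathcal{Z}_j\subseteq\bigcup_{i\in S}\mathcal{Z}_i$. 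Since $\hat{f}_n\in\hat{\mathcal{P}}(n)$ is one particular member of the class over which $W_n$ takes a supremum, $W_n(\mathbf{x})\geq E_{\hat{f}_n}(\mathbf{x}):=\sqrt{\int (b-\mathbf{r}^\top\mathbf{x})^2\,d\hat{f}_n}$ for every $\mathbf{x}$.

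Now I would chain inequalities. Because $\hat{\mathbf{x}}_n$ minimizes the Lasso objective, it minimizes $W_n(\cdot)$ (which differs only by the additive constant $c_n$), so $W_n(\hat{\mathbf{x}}_n)\leq W_n(\mathbf{x}(\mathbb{P}))=E_n(\mathbf{x}(\mathbb{P}))+c_n\|\mathbf{x}(\mathbb{P})\|_1+c_n$. By the Strong Law of Large Numbers, valid since $\mathbb{P}$ has bounded support and hence the integrand is bounded, $E_n(\mathbf{x}(\mathbb{P}))\to E_{\mathbb{P}}(\mathbf{x}(\mathbb{P}))$ almost surely, while $c_n\to 0$ kills the two remaining terms; thus $W_n(\mathbf{x}(\mathbb{P}))\to E_{\mathbb{P}}(\mathbf{x}(\mathbb{P}))$ a.s. On the left side I would transfer $E_{\hat{f}_n}$ back to $E_{\mathbb{P}}$: bounded support of $\mathbb{P}$ together with $\|\mathbf{x}\|_2\leq H$ makes $(b-\mathbf{r}^\top\mathbf{x})^2$ bounded by some $M$ uniformly over the ball, so $|E_{\hat{f}_n}(\mathbf{x})^2-E_{\mathbb{P}}(\mathbf{x})^2|\leq M\|\hat{f}_n-f^*\|_1$ uniformly on $\{\|\mathbf{x}\|_2\leq H\}$. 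Invoking the classical $\mathcal{L}^1$-convergence of the kernel density estimator, valid precisely under $c_n\downarrow 0$ and $nc_n^{m+1}\to\infty$ with dimension $d=m+1$, this bound vanishes, whence $\sup_{\|\mathbf{x}\|_2\leq H}|E_{\hat{f}_n}(\mathbf{x})-E_{\mathbb{P}}(\mathbf{x})|\to 0$. Combining, $E_{\mathbb{P}}(\hat{\mathbf{x}}_n)\leq E_{\hat{f}_n}(\hat{\mathbf{x}}_n)+o(1)\leq W_n(\hat{\mathbf{x}}_n)+o(1)\leq W_n(\mathbf{x}(\mathbb{P}))+o(1)\to E_{\mathbb{P}}(\mathbf{x}(\mathbb{P}))$, which furnishes the $\limsup$ bound and, with the lower bound, the theorem.

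The step I expect to be the main obstacle is the transfer between the kernel-density error and the true error: one must simultaneously exhibit a member of $\hat{\mathcal{P}}(n)$ that is \emph{genuinely} a kernel density estimator of the correct bandwidth, so that the $\mathcal{L}^1$-consistency theorem applies, and upgrade $\mathcal{L}^1$ closeness of densities to \emph{uniform} (over the parameter ball) closeness of the quadratic error functionals. Both hinge on the boundedness hypotheses --- bounded support of $\mathbb{P}$ and the a priori bound $\|\hat{\mathbf{x}}_n\|_2\leq H$ --- which are exactly what render the integrand uniformly bounded and make $nc_n^{m+1}\to\infty$ the right rate condition.
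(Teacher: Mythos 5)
Your proposal is correct and follows essentially the same route as the paper: exhibit the uniform-kernel density estimator with bandwidth $c_n$ as a member of $\hat{\mathcal{P}}(n)$, chain the optimality of $\mathbf{x}(c_n,\mathcal{S}_n)$ through Corollary~\ref{cor.distributionalrobust} against $\mathbf{x}(\mathbb{P})$, and transfer from $\hat{\mu}_n$ to $\mathbb{P}$ via the boundedness constant and Devroye's $\mathcal{L}^1$-consistency of the kernel estimator. The only (harmless) difference is that you make the trivial $\liminf$ lower bound explicit, which the paper leaves implicit.
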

\begin{proof}
{\bf Step 1}: We show that  the right hand side of
Equation~(\ref{equ.distributionrobust}) includes a kernel density
estimator for the true (unknown) distribution. Consider the
following kernel estimator  given samples $\mathcal{S}_n=(b_i,
\mathbf{r}_i)_{i=1}^n$  and tradeoff parameter $c_n$,
\begin{equation}\label{equ.kerneldensity}\begin{split}
&f_n(b,\mathbf{r})\triangleq (nc_n^{m+1})^{-1} \sum_{i=1}^n
K\left(\frac{b-b_i,\mathbf{r}-\mathbf{r}_i}{c_n}\right),\\
&\mbox{where: }K(\mathbf{x})\triangleq
I_{[-1,+1]^{m+1}}(\mathbf{x})/2^{m+1}.\end{split}\end{equation}

Let $\hat{\mu}_n$ denote the distribution given by the density
function $f_n(b,\mathbf{r})$. Easy to check that $\hat{\mu}_n$
belongs to $\mathcal{P}_n (A, ({c}_n\mathbf{1}_n,\cdots,
{c}_n\mathbf{1}_n), \mathbf{b}, {c}_n\mathbf{1}_n)$ and hence
belongs to $\hat{\mathcal{P}}(n)$ by definition.

{\bf Step 2}: Using the $\mathcal{L}^1$ convergence property of the
kernel density estimator, we prove the consistency of robust
regression and equivalently Lasso.

First notice that, $\|\mathbf{x}(c_n, \mathcal{S}_n)\|_2\leq H$ and
$\mathbb{P}$ has a bounded support implies that there exists a
universal constant $C$ such that
\[\max_{b,\mathbf{r}} (b-\mathbf{r}^\top \mathbf{w}(c_n,
\mathcal{S}_n))^2 \leq C.\]

By Corollary~\ref{cor.distributionalrobust} and $\hat{\mu}_n \in
\hat{\mathcal{P}}(n)$ we have
\begin{equation*}
\begin{split}
 &\sqrt{\int_{b,\mathbf{r}} (b-\mathbf{r}^\top \mathbf{x}(c_n,
\mathcal{S}_n))^2 d \hat{\mu}_n(b,\mathbf{r})}\\
\leq&\sup_{\mu\in \hat{\mathcal{P}}(n)} \sqrt{\int_{b,\mathbf{r}}
(b-\mathbf{r}^\top
\mathbf{x}(c_n, \mathcal{S}_n))^2 d \mu(b,\mathbf{r})} \\
=& \frac{\sqrt{n}}{n}\sqrt{\sum_{i=1}^n (b_i-\mathbf{r}_i^\top
\mathbf{x}(c_n, \mathcal{S}_n))^2}+c_n\|\mathbf{x}(c_n,
\mathcal{S}_n)\|_1+c_n\\
\leq & \frac{\sqrt{n}}{n}\sqrt{\sum_{i=1}^n (b_i-\mathbf{r}_i^\top
\mathbf{x}(\mathbb{P}))^2}+c_n\|\mathbf{x}(\mathbb{P})\|_1+c_n,
\end{split}
\end{equation*}
the last inequality holds by definition of $\mathbf{x}(c_n,
\mathcal{S}_n)$.

Taking the square of both sides, we have
\begin{equation*}\begin{split}&\int_{b,\mathbf{r}} (b-\mathbf{r}^\top \mathbf{x}(c_n,
\mathcal{S}_n))^2 d \hat{\mu}_n(b,\mathbf{r})\\\leq&
\frac{1}{n}\sum_{i=1}^n (b_i-\mathbf{r}_i^\top
\mathbf{x}(\mathbb{P}))^2+c_n^2
(1+\|\mathbf{x}(\mathbb{P})\|_1)^2\\&\quad\quad+2c_n(1+\|\mathbf{x}(\mathbb{P})\|_1)\sqrt{\frac{1}{n}\sum_{i=1}^n
(b_i-\mathbf{r}_i^\top \mathbf{x}(\mathbb{P}))^2}.
\end{split}\end{equation*}
Notice that, the right-hand side converges to $\int_{b,\mathbf{r}}
(b-\mathbf{r}^\top \mathbf{x}(\mathbb{P}))^2 d\mathbb{P}(b,
\mathbf{r})$ as $n\uparrow \infty$ and $c_n\downarrow 0$ almost
surely. Furthermore, we have
\begin{equation*}\begin{split}&\int_{b,\mathbf{r}} (b-\mathbf{r}^\top
\mathbf{x}(c_n, \mathcal{S}_n))^2 d\mathbb{P}(b,\mathbf{r})
\\\leq& \int_{b,\mathbf{r}}
(b-\mathbf{r}^\top \mathbf{x}(c_n, \mathcal{S}_n))^2 d
\hat{\mu}_n(b,\mathbf{r})\\&\quad\quad
+\big[\max_{b,\mathbf{r}}(b-\mathbf{r}^\top\mathbf{x}(c_n,
\mathcal{S}_n))^2\big] \int_{b,\mathbf{r}} |f_n(b,\mathbf{r})-f^*(b,
\mathbf{r})| d(b,\mathbf{r})\\\leq &\int_{b,\mathbf{r}}
(b-\mathbf{r}^\top \mathbf{x}(c_n, \mathcal{S}_n))^2 d
\hat{\mu}_n(b,\mathbf{r}) +C \int_{b,\mathbf{r}}
|f_n(b,\mathbf{r})-f^*(b, \mathbf{r})| d(b,\mathbf{r}),\end{split}
\end{equation*}
where the last inequality follows from the definition of $C$. Notice
that $\int_{b,\mathbf{r}} |f_n(b,\mathbf{r})-f^*(b, \mathbf{r})|
d(b,\mathbf{r})$ goes to zero almost surely when $c_n\downarrow 0$
and $nc_n^{m+1}\uparrow \infty$ since $f_n(\cdot)$ is a kernel
density estimation of $f^*(\cdot)$ (see e.g. Theorem 3.1 of
\cite{Devroye85}). Hence the theorem follows.
\end{proof}

We can remove the assumption that $\|\mathbf{x}(c_n,
\mathcal{S}_n)\|_2\leq H$, and as in Theorem~\ref{thm.consistencyforbounded},  the proof technique rather
than the result itself is of interest.
\begin{theorem}
\label{thm.consistencyunbdd}
Let $\{c_n\}$ converge to zero sufficiently slowly. Then
\[\lim_{n\rightarrow \infty} \sqrt{\int_{b,\mathbf{r}} (b-\mathbf{r}^\top \mathbf{x}(c_n, \mathcal{S}_n))^2 d\mathbb{P}(b,
\mathbf{r})}=\sqrt{\int_{b,\mathbf{r}} (b-\mathbf{r}^\top
\mathbf{x}(\mathbb{P}))^2 d\mathbb{P}(b, \mathbf{r})},\] almost
surely.
\end{theorem}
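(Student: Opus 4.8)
The plan is to remove the a priori bound $H$ by replacing it with a bound on $\|\mathbf{x}(c_n,\mathcal{S}_n)\|_1$ that is allowed to grow with $n$, and then to re-run the argument of Theorem~\ref{thm.consistencyforbounded}, checking the single place where $H$ was used. First I would obtain the growing bound directly from optimality: since $\mathbf{x}(c_n,\mathcal{S}_n)$ minimizes $\sqrt{\tfrac1n\sum_i(b_i-\mathbf{r}_i^\top\mathbf{x})^2}+c_n\|\mathbf{x}\|_1$, testing the feasible point $\mathbf{x}=\mathbf{0}$ gives $c_n\|\mathbf{x}(c_n,\mathcal{S}_n)\|_1\le\sqrt{\tfrac1n\sum_i b_i^2}$. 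Because $\mathbb{P}$ has bounded support, the right-hand side converges almost surely to $\sqrt{\mathbb{E}[b^2]}<\infty$, so for some constant $M_0$ and all large $n$ we have $\|\mathbf{x}(c_n,\mathcal{S}_n)\|_1\le M_0/c_n=:T_n$ almost surely.

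With this bound in hand I would retrace the final inequality in the proof of Theorem~\ref{thm.consistencyforbounded}. The chain built on Corollary~\ref{cor.distributionalrobust} never used boundedness of the solution, so it still yields $\limsup_n \int (b-\mathbf{r}^\top\mathbf{x}(c_n,\mathcal{S}_n))^2\,d\hat\mu_n\le \int(b-\mathbf{r}^\top\mathbf{x}(\mathbb{P}))^2\,d\mathbb{P}$, where $\hat\mu_n$ is the box-kernel density estimator of Step~1 there. The only step that invoked $H$ was the passage from $\hat\mu_n$ to $\mathbb{P}$, bounded by $[\max_{b,\mathbf{r}}(b-\mathbf{r}^\top\mathbf{x}(c_n,\mathcal{S}_n))^2]\int|f_n-f^*|$. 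On the bounded support (say $|b|\le B$, $\|\mathbf{r}\|_\infty\le\rho$) this maximum is now at most $(B+\rho T_n)^2=O(1/c_n^2)$, so a naive reduction to Theorem~\ref{thm.consistencyforbounded} would succeed provided $\tfrac1{c_n^2}\int|f_n-f^*|\to0$ almost surely.

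The hard part is precisely this amplified error term, and I expect it to be the main obstacle. Since the estimator's $\mathcal{L}^1$ error contains a bias $\int|K_{c_n}*f^*-f^*|$ that for a generic (nonsmooth) $f^*$ need not be $o(c_n^2)$, the crude ``maximum times total variation'' bound cannot be driven to zero by any choice of $c_n$. I would therefore replace it by a uniform law of large numbers over the growing $\ell^1$-ball. Using that the box kernel only adds nonnegative smearing variance, $\int(b-\mathbf{r}^\top\mathbf{x})^2\,d\hat\mu_n=\tfrac1n\sum_i(b_i-\mathbf{r}_i^\top\mathbf{x})^2+\tfrac{c_n^2}{3}(1+\|\mathbf{x}\|_2^2)\ge \hat R_n(\mathbf{x})$, so the chain already gives $\limsup_n\hat R_n(\mathbf{x}(c_n,\mathcal{S}_n))\le \int(b-\mathbf{r}^\top\mathbf{x}(\mathbb{P}))^2\,d\mathbb{P}$ for the empirical risk $\hat R_n$.

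Finally I would write $R(\mathbf{x}(c_n,\mathcal{S}_n))\le \hat R_n(\mathbf{x}(c_n,\mathcal{S}_n))+\sup_{\|\mathbf{x}\|_1\le T_n}|R(\mathbf{x})-\hat R_n(\mathbf{x})|$ for the population risk $R(\mathbf{x})=\int(b-\mathbf{r}^\top\mathbf{x})^2\,d\mathbb{P}$. On the bounded support the class $\{(b,\mathbf{r})\mapsto(b-\mathbf{r}^\top\mathbf{x})^2:\|\mathbf{x}\|_1\le T_n\}$ has envelope and Lipschitz constant of order $T_n^2$ and $T_n$, so a standard covering/concentration estimate gives $\sup_{\|\mathbf{x}\|_1\le T_n}|R-\hat R_n|\to0$ almost surely as long as $T_n$ grows slowly relative to $n$, e.g. $T_n^2\sqrt{(\log n)/n}\to0$, that is $c_n^2\sqrt{n/\log n}\to\infty$. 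This is exactly what ``letting $c_n\to0$ sufficiently slowly'' should mean, and it is compatible with the requirement $nc_n^{m+1}\to\infty$ inherited from the kernel estimator. Combining the two displays yields $\limsup_n R(\mathbf{x}(c_n,\mathcal{S}_n))\le R(\mathbf{x}(\mathbb{P}))$, while $R(\mathbf{x}(\mathbb{P}))\le R(\mathbf{x}(c_n,\mathcal{S}_n))$ holds by definition of $\mathbf{x}(\mathbb{P})$; taking square roots gives the claim.
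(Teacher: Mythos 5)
Your proof is correct, but the way you handle the crucial step---beating the $O(1/c_n^2)$ amplification of the density-estimation error---is genuinely different from the paper's. You correctly diagnose the obstacle: the box-kernel estimator's $\mathcal{L}^1$ error has a smoothing bias that is not $o(c_n^2)$ for a generic $f^*$, so the ``maximum times total variation'' bound from Theorem~\ref{thm.consistencyforbounded} cannot be rescued by slowing $c_n$. The paper resolves this by staying inside its distribution-class machinery: Lemma~\ref{lem.partition} shows that \emph{every} measure matching the empirical frequencies on a partition of the support into cells of $\ell^\infty$ radius $c_n$ lies in $\hat{\mathcal{P}}(n)$, and the \emph{infimum} of $\int|f_\mu-f^*|$ over this class is bounded by the multinomial discrepancy $\sum_t|\mathbb{P}(V_t)-\hat{\mathbb{P}}_n(V_t)|$, which carries no smoothing bias and tends to zero a.s.\ for each fixed $c_n$; a diagonalization then absorbs the $2C/c_n^2$ factor. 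You instead abandon the density-comparison route at that point: you observe that the box kernel only adds nonnegative smearing variance, so the Corollary~\ref{cor.distributionalrobust} chain already controls the \emph{empirical} risk $\hat R_n(\mathbf{x}(c_n,\mathcal{S}_n))$, and you close the gap to the population risk with a uniform law of large numbers over the $\ell^1$-ball of radius $O(1/c_n)$ via a covering/Hoeffding/Borel--Cantelli argument. Both are valid. Yours buys an explicit quantitative meaning for ``sufficiently slowly'' (roughly $c_n^2\sqrt{n/\log n}\to\infty$), whereas the paper's condition is left implicit in the diagonalization; on the other hand, the paper's route keeps the entire consistency proof within the robustness/worst-case-distribution framework it is advertising, while your final step is a classical uniform-convergence argument over a parametric class---exactly the kind of tool the paper is at pains to avoid as a methodological matter, even though it is unimpeachable mathematically. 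One small polish: with bounded support, $\|\mathbf{x}(c_n,\mathcal{S}_n)\|_1\le \sup|b|/c_n$ holds deterministically for every $n$, so you do not need the ``for large $n$, almost surely'' qualifier on the a priori bound.
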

\begin{proof} To prove the theorem, we need to consider a set of distributions belonging to
$\hat{\mathcal{P}}(n)$. Hence we establish the following lemma
first.
\begin{lemma}\label{lem.partition}Partition the support of $\mathbb{P}$ as $V_1,\cdots, V_T$ such
the $\ell^\infty$ radius of each set is less than $c_n$. If a
distribution $\mu$ satisfies
\begin{equation}\label{equ.partition}\mu(V_t)=\Big|\big\{i|(b_i,\mathbf{r}_i)\in V_t\big\}\Big|/n;\quad
t=1,\cdots, T,\end{equation} then $\mu\in \hat{\mathcal{P}}(n)$.
\end{lemma}
\begin{proof}
Let $\mathcal{Z}_i=[b_i-c_n,b_i+c_n]\times\prod_{j=1}^m [a_{ij}-c_n,
a_{ij}+c_n]$; recall that $a_{ij}$ the $j^{th}$ element of
$\mathbf{r}_i$. Notice $V_t$ has $\ell^{\infty}$ norm less than $c_n$
we have
\[(b_i, \mathbf{r}_i\in V_t) \Rightarrow V_t\subseteq
\mathcal{Z}_i.\] Therefore, for any $S\subseteq \{1, \cdots, n\}$,
the following holds \begin{equation*}\begin{split}&\mu(\bigcup_{i\in
S} \mathcal{Z}_i)\geq \mu(\bigcup V_t|\exists i\in S: b_i,
\mathbf{r}_i\in V_t)\\=&\sum_{t|\exists i\in S: b_i, \mathbf{r}_i\in
V_t} \mu(V_t)=\sum_{t|\exists i\in S: b_i, \mathbf{r}_i\in V_t}\#
\big((b_i,\mathbf{r}_i)\in V_t\big)/n\geq |S|/n.
\end{split}\end{equation*}
Hence $\mu\in \mathcal{P}_n(A, \Delta, b, c_n)$ where each element
of $\Delta$ is $c_n$, which leads to $\mu\in \hat{\mathcal{P}}(n)$.
\end{proof}
Now we proceed to prove the theorem. Partition the support of
$\mathbb{P}$ into $T$ subsets such that $\ell^\infty$ radius of each
one is smaller than $c_n$. Denote $\tilde{\mathcal{P}}(n)$ as the
set of probability measures satisfying
Equation~(\ref{equ.partition}). Hence
$\tilde{\mathcal{P}}(n)\subseteq \hat{\mathcal{P}}(n)$ by
Lemma~\ref{lem.partition}. Further notice that there exists a
universal constant $K$ such that $\|\mathbf{x}(c_n,
\mathcal{S}_n)\|_2 \leq K/c_n$ due to the fact that the square loss
of the solution $\mathbf{x}=\mathbf{0}$ is bounded by a constant
only depends on the support of $\mathbb{P}$. Thus, there exists a
constant $C$ such that $\max_{b, \mathbf{r}}
(b-\mathbf{r}^\top\mathbf{x}(c_n, \mathcal{S}_n))^2\leq C/c_n^2$.

Follow a similar argument as the proof of
Theorem~\ref{thm.consistencyforbounded}, we have
\begin{equation}\begin{split}\label{equ.proofofconsisnobound}&\sup_{\mu_n\in \tilde{\mathcal{P}}(n)}\int_{b,\mathbf{r}} (b-\mathbf{r}^\top \mathbf{x}(c_n,
\mathcal{S}_n))^2 d \mu_n(b,\mathbf{r})\\\leq&
\frac{1}{n}\sum_{i=1}^n (b_i-\mathbf{r}_i^\top
\mathbf{x}(\mathbb{P}))^2+c_n^2
(1+\|\mathbf{x}(\mathbb{P})\|_1)^2\\&\quad\quad+2c_n(1+\|\mathbf{x}(\mathbb{P})\|_1)\sqrt{\frac{1}{n}\sum_{i=1}^n
(b_i-\mathbf{r}_i^\top \mathbf{x}(\mathbb{P}))^2},
\end{split}\end{equation}
and
\begin{equation*}\begin{split}&\int_{b,\mathbf{r}} (b-\mathbf{r}^\top
\mathbf{x}(c_n, \mathcal{S}_n))^2 d\mathbb{P}(b,\mathbf{r})
\\\leq&\inf_{\mu_n\in \tilde{\mathcal{P}}(n)}\Big\{ \int_{b,\mathbf{r}}
(b-\mathbf{r}^\top \mathbf{x}(c_n, \mathcal{S}_n))^2 d
\mu_n(b,\mathbf{r})\\&\quad\quad
+\max_{b,\mathbf{r}}(b-\mathbf{r}^\top\mathbf{x}(c_n,
\mathcal{S}_n))^2 \int_{b,\mathbf{r}} |f_{\mu_n}(b,\mathbf{r})-f(b,
\mathbf{r})| d(b,\mathbf{r})\\
\leq &\sup_{\mu_n\in \tilde{\mathcal{P}}(n)} \int_{b,\mathbf{r}}
(b-\mathbf{r}^\top \mathbf{x}(c_n, \mathcal{S}_n))^2 d
\mu_n(b,\mathbf{r})\\&\quad\quad +2C/c_n^2\inf_{\mu'_n\in
\tilde{\mathcal{P}}(n)}\Big\{ \int_{b,\mathbf{r}}
|f_{\mu'_n}(b,\mathbf{r})-f(b, \mathbf{r})|
d(b,\mathbf{r})\Big\},\end{split}
\end{equation*}
here $f_{\mu}$ stands for the density function of a measure $\mu$.
Notice that $\tilde{\mathcal{P}}_n$ is the set of distributions
satisfying Equation~(\ref{equ.partition}), hence $\inf_{\mu'_n\in
\tilde{\mathcal{P}}(n)}\int_{b,\mathbf{r}}
|f_{\mu'_n}(b,\mathbf{r})-f(b, \mathbf{r})| d(b,\mathbf{r})$ is
upper-bounded by $\sum_{t=1}^T |\mathbb{P}(V_t)-\# (b_i,
\mathbf{r}_i \in V_t)|/n$, which goes to zero as $n$ increases for
any fixed $c_n$ (see for example Proposition A6.6 of
\cite{Vaart2000}). Therefore,
\[2C/c_n^2\inf_{\mu'_n\in
\tilde{\mathcal{P}}(n)}\Big\{ \int_{b,\mathbf{r}}
|f_{\mu'_n}(b,\mathbf{r})-f(b, \mathbf{r})|
d(b,\mathbf{r})\Big\}\rightarrow 0,\] if $c_n\downarrow 0$
sufficiently slow. Combining this with
Inequality~(\ref{equ.proofofconsisnobound}) proves the theorem.
\end{proof}

\section{Stability}\label{sec.stability}
Knowing that the robust regression
problem~(\ref{equ.robustregression}) and in particular Lasso
encourage sparsity, it is of interest to investigate another
desirable characteristic of a learning algorithm, namely, stability.
We show in this section that Lasso {\em is not stable}. This is a
special case of a more general result we prove in
\cite{XuCaramanisMannorNFLallerton}, where we show that this is a
common property for all algorithms that encourage sparsity. That is,
if a learning algorithm achieves certain sparsity condition, then it
cannot have a non-trivial stability bound.

We recall the definition of uniform stability \cite{Bousquet02}
first. We let $\mathcal{Z}$ denote the space of points and labels
(typically this will be a compact subset of $\R^{n+1}$) so that $S
\in \mathcal{Z}^m$ denotes a collection of $m$ labelled training
points. We let $\mathbb{L}$ denote a learning algorithm, and for $S
\in \mathcal{Z}^m$, we let $\mathbb{L}_S$ denote the output of the
learning algorithm (i.e., the regression function it has learned
from the training data). Then given a loss function $l$, and a
labeled point $s =(\mathbf{z},b) \in \mathcal{Z}$, we let
$l(\mathbb{L}_S,s)$ denote the loss of the algorithm that has been
trained on the set $S$, on the data point $s$. Thus for squared
loss, we would have $l(\mathbb{L}_S,s) = \|\mathbb{L}_S(\mathbf{z})
- b \|_2$.

\begin{definition} An algorithm $\mathbb{L}$ has
uniform stability bound of $\beta_m$ with respect to the loss
function $l$ if the following holds
\[\forall S\in \mathcal{Z}^m, \forall i\in \{1,\cdots,m\}, \|l(\mathbb{L}_S,
\cdot)-l(\mathbb{L}_{S^{\backslash i}},\cdot)\|_\infty \leq
\beta_m.\]
Here $\mathbb{L}_{S^{\backslash i}}$ stands for the
learned solution with the $i^{th}$ sample removed from $S$.
\end{definition}
At first glance, this definition may seem too stringent for any
reasonable algorithm to exhibit good stability properties. However,
as shown in \cite{Bousquet02}, {\it Tikhonov-regularized regression
has stability that scales as $1/m$}. Stability that scales at least
as fast as $o(\frac{1}{\sqrt{m}})$ can be used to establish strong
PAC bounds (see \cite{Bousquet02}).

In this section we show that not only is the stability (in the sense defined above)
of Lasso much worse than the stability of $\ell^2$-regularized regression, but in
fact Lasso's stability is, in the following sense, as bad as it gets.
To this end, we define the notion of the trivial bound, which is the worst
possible error a training algorithm can have for arbitrary training
set and testing sample labelled by zero.
\begin{definition}\label{def.triaboundlasso}
Given a subset from which we can draw $m$ labelled points, $\mathcal{Z}\subseteq \mathbb{R}^{n\times (m+1)}$ and
a subset for one unlabelled point, $\mathcal{X}\subseteq \mathbb{R}^{m}$, a trivial bound
for a learning algorithm $\mathbb{L}$ w.r.t. $\mathcal{Z}$  and
$\mathcal{X}$ is
\[\mathfrak{b}(\mathbb{L}, \mathcal{Z}, \mathcal{X})\triangleq \max_{S\in \mathcal{Z}, \mathbf{z}\in \mathcal{X}} l\big(\mathbb{L}_{S},
(\mathbf{z},0)\big).\]
As above, $l(\cdot,\cdot)$ is a given loss function.
\end{definition}
Notice that the trivial bound does not diminish as the
number of samples increases, since by repeatedly choosing
the worst sample, the algorithm will yield the same solution.

Now we show that the uniform stability bound of Lasso can be no better than
its trivial bound with the number of features halved.
\begin{theorem}Given $\hat{\mathcal{Z}}\subseteq \mathbb{R}^{n\times
(2m+1)}$ be the domain of sample set and $\hat{\mathcal{X}}\subseteq
\mathbb{R}^{2m}$ be the domain of new observation, such that
\begin{equation*}\begin{split}&(\mathbf{b}, A)\in
\mathcal{Z}\Longrightarrow (\mathbf{b}, A, A)\in \hat{\mathcal{X}},\\
&(\mathbf{z}^\top)\in \mathcal{X}\Longrightarrow (\mathbf{z}^\top,
\mathbf{z}^\top)\in \hat{\mathcal{X}}.
\end{split}
\end{equation*}
Then  the uniform stability bound of Lasso is lower bounded by
$\mathfrak{b}(\mathrm{Lasso}, \mathcal{Z}, \mathcal{X})$.
\end{theorem}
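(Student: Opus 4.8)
The plan is to lower bound the uniform stability constant by producing one doubled sample set, one deleted sample, and one test point on which the loss of Lasso changes by a full $\mathfrak{b}(\mathrm{Lasso},\mathcal{Z},\mathcal{X})$; since the stability bound must dominate this change, the result follows. First I would fix a worst case for the trivial bound: let $S^\star=(\mathbf{b},A)\in\mathcal{Z}$ and $\mathbf{z}^\star\in\mathcal{X}$ attain $\mathfrak{b}(\mathrm{Lasso},\mathcal{Z},\mathcal{X})$, and let $\mathbf{x}^\star$ be the Lasso solution trained on $S^\star$, so that the prediction at the zero-labeled point $(\mathbf{z}^\star,0)$ satisfies $|\mathbf{z}^{\star\top}\mathbf{x}^\star|=\mathfrak{b}(\mathrm{Lasso},\mathcal{Z},\mathcal{X})$. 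By the hypothesis on $\hat{\mathcal{Z}}$, the doubled set $\hat{S}=(\mathbf{b},A,A)$ lies in $\hat{\mathcal{Z}}$, and I would analyze Lasso on it.

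The structural point I would exploit is the degeneracy of Lasso under duplicated features. Writing a $2m$-vector as $(\mathbf{y}_1,\mathbf{y}_2)$, the residual $\|\mathbf{b}-A(\mathbf{y}_1+\mathbf{y}_2)\|_2$ depends only on the sum $\mathbf{y}_1+\mathbf{y}_2$, while for a fixed sum $\mathbf{s}$ the penalty $\|\mathbf{y}_1\|_1+\|\mathbf{y}_2\|_1$ is minimized exactly at $\|\mathbf{s}\|_1$, namely when $\mathbf{y}_1,\mathbf{y}_2$ carry $\mathbf{s}$ with no coordinatewise sign cancellation. Hence, using the regularized form of Theorem~\ref{them.l2ci} to read off the objective, the optimal set of the doubled Lasso is precisely $\{(\mathbf{y}_1,\mathbf{y}_2): \mathbf{y}_1+\mathbf{y}_2=\mathbf{x}^\star,\ \mathrm{no\ sign\ cancellation}\}$; in particular both extreme vertices $(\mathbf{x}^\star,\mathbf{0})$ and $(\mathbf{0},\mathbf{x}^\star)$ are optimal. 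This is exactly the sparsity-driven ambiguity I want: weight assigned to the first copy can be moved wholesale to the second copy at no cost.

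To read off which copy carries the weight I would use an \emph{asymmetric} test point $\hat{\mathbf{z}}=(\mathbf{z}^\star,\mathbf{0})$ (admissible because the $\ell^\infty$ norm in the stability definition ranges over all test points): the vertex $(\mathbf{x}^\star,\mathbf{0})$ predicts $\mathbf{z}^{\star\top}\mathbf{x}^\star=\pm\mathfrak{b}(\mathrm{Lasso},\mathcal{Z},\mathcal{X})$, whereas any solution supported on the second copy predicts $0$. Thus if Lasso returns a first-copy solution on $\hat{S}$ and a second-copy solution on $\hat{S}^{\backslash i}$ for some $i$, then
\[
\big|\,l(\mathbb{L}_{\hat{S}},(\hat{\mathbf{z}},0))-l(\mathbb{L}_{\hat{S}^{\backslash i}},(\hat{\mathbf{z}},0))\,\big| = \mathfrak{b}(\mathrm{Lasso},\mathcal{Z},\mathcal{X}),
\]
and since the uniform stability bound must upper bound the left-hand side for every $\hat{S}$, $i$, we conclude $\beta\geq\mathfrak{b}(\mathrm{Lasso},\mathcal{Z},\mathcal{X})$.

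The hard part is precisely arranging this ``switch.'' Because $\hat{S}=(\mathbf{b},A,A)$ is symmetric under swapping the two feature blocks, and deleting a row preserves that symmetry, no single sample intrinsically favors one copy, so the switch cannot come from the data and must be produced by the selection rule of the sparsity-seeking algorithm. I would handle this by using that Lasso returns a vertex of the optimal face and by pinning down the selection with an infinitesimal asymmetric perturbation of the two copies: tilt the perturbation so the returned vertex sits on the first copy for the full set and on the second copy for $\hat{S}^{\backslash i}$, then pass to the limit to recover the exact-duplication instance. Making this selection argument airtight — so that it applies to Lasso as actually run, and not merely to some optimal point — is where the real work lies, and it is exactly what exposes the general principle: an algorithm compelled to concentrate weight on a sparse support must, on duplicated features, choose arbitrarily among equivalent supports, and that arbitrariness is worth a full trivial bound of instability.
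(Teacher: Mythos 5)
There is a genuine gap, and you have located it yourself: your construction uses the purely doubled sample set $(\mathbf{b}^*,A^*,A^*)$ and then deletes one of the original rows, but since both the doubled set and every such deleted set are invariant under swapping the two feature blocks, nothing in the data ever forces the ``switch'' of mass from one copy to the other; your proposed fix (an infinitesimal asymmetric tilt followed by a limit) both leaves the admissible domain $\hat{\mathcal{Z}}$ (which contains only exact duplications) and does not actually control which vertex of the optimal face the algorithm returns before versus after deletion, so the argument does not close. The paper's proof supplies exactly the missing ingredient: it appends one \emph{extra labelled training row} $(0,\mathbf{0}^\top,\mathbf{z}^{*\top})$ to the doubled set. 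This row plays three roles at once. First, it breaks the copy symmetry in the data itself: with it present, any candidate $(\mathbf{y}_1,\mathbf{y}_2)$ with $\mathbf{z}^{*\top}\mathbf{y}_2\neq 0$ pays a strictly positive extra residual while the $\ell^1$ penalty satisfies $\|\mathbf{y}_1\|_1+\|\mathbf{y}_2\|_1\geq\|\mathbf{y}_1+\mathbf{y}_2\|_1$, so \emph{every} optimum of the augmented problem has $\mathbf{z}^{*\top}\mathbf{y}_2=0$ and hence zero loss on that row --- no tie-breaking is needed on the full set. Second, it is the sample that gets deleted, after which $(\mathbf{0}^\top,\mathbf{x}^{*\top})^\top$ becomes optimal. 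Third, it doubles as the test observation, on which the leave-one-out solution incurs loss $|\mathbf{z}^{*\top}\mathbf{x}^*|=\mathfrak{b}(\mathrm{Lasso},\mathcal{Z},\mathcal{X})$.

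Two further remarks. Your diagnosis of the degeneracy of Lasso under duplicated features (the optimal face $\{\mathbf{y}_1+\mathbf{y}_2=\mathbf{x}^*$ with no coordinatewise sign cancellation$\}$) is correct and is implicitly the same mechanism the paper exploits; what you are missing is only the device that converts this ambiguity into a data-driven switch compatible with the leave-one-out structure of the stability definition. Also, your asymmetric test point $(\mathbf{z}^\star,\mathbf{0})$ is not licensed by the hypothesis, which only places the doubled vectors $(\mathbf{z}^\top,\mathbf{z}^\top)$ in $\hat{\mathcal{X}}$; the paper's choice of the appended row as the test observation sidesteps the need for such a point (though, admittedly, the theorem's stated domain conditions are themselves loose on this score). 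Finally, note that even the paper's argument retains a mild tie-breaking assumption on the \emph{deleted} set, where $(\mathbf{x}^{*\top},\mathbf{0}^\top)^\top$ and $(\mathbf{0}^\top,\mathbf{x}^{*\top})^\top$ are both optimal; your instinct that this is where care is needed is sound, but the augmented-row construction confines the issue to one side of the comparison rather than both.
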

\begin{proof} Let $(\mathbf{b}^*, A^*)$ and $(0, \mathbf{z}^{*\top})$
be the sample set and the new observation such that they jointly
achieve $\mathfrak{b}(\mathrm{Lasso}, \mathcal{Z}, \mathcal{X})$, and let
$\mathbf{x}^*$ be the optimal solution to Lasso w.r.t
$(\mathbf{b}^*, A^*)$. Consider the following sample set
\[\left(\begin{array}{ccc}\mathbf{b}^* & A^* & A^*\\ 0 &\mathbf{0}^\top &\mathbf{z}^{*\top}\end{array}
\right).\] Observe that $(\mathbf{x}^\top, \mathbf{0}^\top)^\top$ is
an optimal solution of Lasso w.r.t to this sample set. Now remove
the last sample from the sample set. Notice that $(\mathbf{0}^\top,
\mathbf{x}^\top)^\top$ is an optimal solution for this new sample
set. Using the last sample as a testing observation, the solution
w.r.t the full sample set has zero cost, while the solution of the
leave-one-out sample set has a cost $\mathfrak{b}(\mathrm{Lasso}, \mathcal{Z},
\mathcal{X})$. And hence we prove the theorem.
\end{proof}

\section{Conclusion}\label{sec.conclusion}
In this paper, we considered  robust regression with a
least-square-error loss. In contrast to previous work on robust
regression, we considered the case where the perturbations of the
observations are in the features. We show that this formulation is
equivalent to a weighted $\ell^1$ norm regularized regression
problem if no correlation of disturbances among different features
is allowed, and hence provide an interpretation of the widely used
Lasso algorithm from a robustness perspective. We also formulated
tractable robust regression problems for disturbance coupled among
different features and hence generalize Lasso to a wider class of
regularization schemes.

The sparsity and consistency of  Lasso are also investigated based
on its robustness interpretation. In particular we present a
``no-free-lunch'' theorem saying that sparsity and algorithmic
stability contradict each other. This result shows, although
sparsity and algorithmic stability are both regarded as desirable
properties of regression algorithms, it is not possible to achieve
them simultaneously, and we have to  tradeoff these two properties
in designing a regression algorithm.

The main thrust of this work is to treat the widely used regularized
regression scheme from a robust optimization perspective, and extend
the result of \cite{Ghaoui97} (i.e., Tikhonov regularization is
equivalent to a robust formulation for Frobenius norm bounded
disturbance set) to a broader range of disturbance set and hence
regularization scheme. This provides us not only with new insight of
why regularization schemes work,   but also offer solid motivations
for selecting regularization parameter for existing regularization
scheme and facilitate designing new regularizing schemes.

\bibliographystyle{unsrt}
\bibliography{Phd1}

\appendices

\section{Proof of Theorem~\ref{thm.probabound}}
{\bf Theorem~\ref{thm.probabound}}. {\it Consider a random vector $\mathbf{v}\in \mathbb{R}^n$,
such that $\mathbb{E}(\mathbf{v})=\mathbf{a}$, and
$\mathbb{E}(\mathbf{v} \mathbf{v}^\top)=\Sigma$, $\Sigma \succeq 0$.
Then we have
\begin{equation}\label{equ.probbound}\Pr\{\|\mathbf{v}\|_2\geq c_i\} \leq
\left\{\begin{array}{rl} \min_{P, \mathbf{q}, r,
\lambda}\,\,&\mathrm{Trace}(\Sigma P)+2\mathbf{q}^\top\mathbf{a}+
r\\\mbox{subject to:}\,\,&\left(\begin{array}{ll}P & \mathbf{q} \\ \mathbf{q}^\top & r\end{array}\right)\succeq 0\\
 &\left(\begin{array}{ll} I(m) &
\mathbf{0} \\ \mathbf{0}^\top &-c^2_i\end{array}\right)\preceq
\lambda \left(\begin{array}{ll} P & \mathbf{q}\\ \mathbf{q}^\top &
r-1\end{array} \right)\\ &\lambda \geq
0.\end{array}\right.\end{equation}}
\begin{proof}
Consider a function $f(\cdot)$ parameterized by $P, \mathbf{q}, r$
defined as $f(\mathbf{v})=\mathbf{v}^\top P\mathbf{v} +2
\mathbf{q}^\top \mathbf{v}+r$. Notice
$\mathbb{E}\big(f(\mathbf{v})\big)=\mathrm{Trace}(\Sigma
P)+2\mathbf{q}^\top\mathbf{a}+ r$. Now we show that
$f(\mathbf{v})\geq \mathbf{1}_{\|\mathbf{v}\|\geq c_i}$ for all $P,
\mathbf{q}, r$ satisfying the constraints in (\ref{equ.probbound}).

To show $f(\mathbf{v})\geq \mathbf{1}_{\|\mathbf{v}\|_2\geq c_i}$,
we need to establish (i) $f(\mathbf{v})\geq 0$ for all $\mathbf{v}$,
and (ii) $f(\mathbf{v})\geq 1 $ when $\|\mathbf{v}\|_2\geq c_i$.
Notice that \[f(\mathbf{v}) =\left(\begin{array}{c}\mathbf{v}
\\ 1\end{array}\right)^\top\left(\begin{array}{ll}P & \mathbf{q}
\\ \mathbf{q}^\top & r\end{array}\right)\left(\begin{array}{c}\mathbf{v}
\\ 1\end{array}\right),\] hence (i) holds because \[\left(\begin{array}{ll}P & \mathbf{q} \\ \mathbf{q}^\top & r\end{array}\right)\succeq
0.\]

To establish condition (ii), it suffices to show
$\mathbf{v}^\top\mathbf{v}\geq c^2_i$ implies $\mathbf{v}^\top P
\mathbf{v} +2 \mathbf{q}^\top \mathbf{v} +r\geq 1$, which is
equivalent to show $\big\{\mathbf{v}\big|\mathbf{v}^\top P
\mathbf{v} +2 \mathbf{q}^\top \mathbf{v} +r- 1\leq 0\big\}\subseteq
\big\{\mathbf{v}\big|\mathbf{v}^\top \mathbf{v}\leq c_i^2\big\}$.
Noticing this is an ellipsoid-containment condition, by S-Procedure,
we see that is equivalent to the condition that there exists a
$\lambda \geq 0$ such that
\[\left(\begin{array}{ll} I(m) &
\mathbf{0} \\ \mathbf{0}^\top &-c^2_i\end{array}\right)\preceq
\lambda \left(\begin{array}{ll} P & \mathbf{q}\\ \mathbf{q}^\top &
r-1\end{array} \right).\]

Hence we have  $f(\mathbf{v})\geq \mathbf{1}_{\|\mathbf{v}\|_2\geq
c_i}$, taking expectation over both side that notice that the
expectation of a indicator function is the probability, we establish
the theorem.
\end{proof}

\section{Proof of Theorem~\ref{thm.generaluncertainty}}
{\bf Theorem~\ref{thm.generaluncertainty}}. {\it
Assume that the set \[\mathcal{Z}\triangleq\{\mathbf{z}\in
\mathbb{R}^m| f_j(\mathbf{z})\leq 0,\,\,j=1,\cdots, k;\,\,
\mathbf{z}\geq \mathbf{0}\}\] has non-empty relative interior. Then the
robust regression problem
\[\min_{\mathbf{x}\in \mathbb{R}^m} \left\{\max_{\Delta A \in
\mathcal{U}'}\|\mathbf{b}-(A+\Delta A)\mathbf{x}\|_a\right\}\] is
equivalent to the following regularized regression problem
\begin{equation*}\label{equ.generaluncertaintyinappendix}\begin{split}&\min_{\boldsymbol{\lambda}\in
\mathbb{R}^k_+,\boldsymbol{\kappa}\in \mathbb{R}^m_+, \mathbf{x}\in
\mathbb{R}^m}\Big\{
\|\mathbf{b}-A\mathbf{x}\|_a+v(\boldsymbol{\lambda},\boldsymbol{\kappa},
\mathbf{x})\Big\};\\&\mbox{where:}\,\,
v(\boldsymbol{\lambda},\boldsymbol{\kappa},\mathbf{x})\triangleq
\max_{\mathbf{c}\in
\mathbf{R}^m}\Big[(\boldsymbol{\kappa}+|\mathbf{x}|)^\top
\mathbf{c}-\sum_{j=1}^k\lambda_j
f_j(\mathbf{c})\Big]\end{split}\end{equation*} }
\begin{proof}Fix a solution $\mathbf{x}^*$. Notice that,
\[\mathcal{U}'=\{(\boldsymbol{\delta}_1,\cdots, \boldsymbol{\delta}_m)|\mathbf{c}\in \mathcal{Z};\,\, \|\boldsymbol{\delta}_i\|_a\leq c_i,\,i=1,\cdots,
m\}.\] Hence we have:
\begin{equation}\label{equ.proofinmostgeneral}
\begin{split}
&\max_{\Delta A \in \mathcal{U}'}\|\mathbf{b}-(A+\Delta
A)\mathbf{x}^*\|_a\\
=&\max_{\mathbf{c}\in
\mathcal{Z}}\Big\{\max_{\|\boldsymbol{\delta}_i\|_a\leq
c_i,\,i=1,\cdots,m}
\|\mathbf{b}-\big(A+(\boldsymbol{\delta}_1,\cdots,\boldsymbol{\delta}_m)\big)\mathbf{x}^*\|_a\Big\}\\
=&\max_{\mathbf{c}\in
\mathcal{Z}}\Big\{\|\mathbf{b}-A\mathbf{x}^*\|_a+\sum_{i=1}^m
c_i|x_i^*|\Big\}\\=&\|\mathbf{b}-A\mathbf{x}^*\|_a+\max_{\mathbf{c}\in
\mathcal{Z}}\Big\{|\mathbf{x}^*|^\top \mathbf{c}\Big\}.
\end{split}
\end{equation}
The second equation follows from Theorem~\ref{thm.generalnorm}.

Now we need to evaluate $\max_{\mathbf{c}\in
\mathcal{Z}}\{|\mathbf{x}^*|^\top \mathbf{c}\}$, which equals to
$-\min_{\mathbf{c}\in \mathcal{Z}}\{-|\mathbf{x}^*|^\top
\mathbf{c}\}$. Hence we are minimizing a linear function over a set
of convex constraints. Furthermore, by assumption the Slater's
condition holds. Hence the duality gap of $\min_{\mathbf{c}\in
\mathcal{Z}}\{-|\mathbf{x}^*|^\top \mathbf{c}\}$ is zero. A standard
duality analysis shows that
\begin{equation}\label{equ.proofinmg2}\max_{\mathbf{c}\in
\mathcal{Z}}\Big\{|\mathbf{x}^*|^\top \mathbf{c}\Big\}
=\min_{\boldsymbol{\lambda}\in \mathbb{R}^k_+,\boldsymbol{\kappa}\in
\mathbb{R}^m_+}v(\boldsymbol{\lambda},\boldsymbol{\kappa},\mathbf{x}^*).
\end{equation} We establish the theorem by substituting Equation~(\ref{equ.proofinmg2}) back into
Equation~(\ref{equ.proofinmostgeneral}) and taking minimum over
$\mathbf{x}$ on both sides.
\end{proof}

\section{Proof of Proposition~\ref{prop.pointvsdist}}
{\bf Proposition~\ref{prop.pointvsdist}}. {\it Given a function $g:\mathbb{R}^{m+1}\rightarrow \mathbb{R}$
and Borel sets $\mathcal{Z}_1,\cdots, \mathcal{Z}_n\subseteq
\mathbb{R}^{m+1}$, let
\[\mathcal{P}_n\triangleq \{\mu\in \mathcal{P}| \forall S\subseteq \{1,\cdots, n\}: \mu(\bigcup_{i\in S} \mathcal{Z}_i)\geq |S|/n \}.\]
The following holds
\[\frac{1}{n}\sum_{i=1}^n \sup_{(\mathbf{r}_i,b_i)\in \mathcal{Z}_i} h(\mathbf{r}_i,b_i) =\sup_{\mu\in \mathcal{P}_n}\int_{\mathbb{R}^{m+1}} h(\mathbf{r},b)d\mu(\mathbf{r},b). \]
}
\begin{proof}
To prove Proposition~\ref{prop.pointvsdist}, we first establish the following
lemma.
\begin{lemma}Given a function $f: \mathbb{R}^{m+1} \rightarrow
\mathbb{R}$, and a Borel set $\mathcal{Z}\subseteq
\mathbb{R}^{m+1}$, the following holds:
\[\sup_{\mathbf{x}'\in \mathcal{Z}} f(\mathbf{x}')=\sup_{\mu\in\mathcal{P}|
\mu(\mathcal{Z})=1}\int_{\mathbb{R}^{m+1}}
f(\mathbf{x})d\mu(\mathbf{x}).\]
\end{lemma}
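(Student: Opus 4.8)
The plan is to prove the claimed equality by establishing the two inequalities separately. Write $M \triangleq \sup_{\mathbf{x}'\in\mathcal{Z}} f(\mathbf{x}')$ for the left-hand side and $N \triangleq \sup_{\mu\in\mathcal{P}:\,\mu(\mathcal{Z})=1}\int_{\mathbb{R}^{m+1}} f\,d\mu$ for the right-hand side, so that the goal is to show $M=N$.

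For the bound $N\le M$, I would argue that any feasible measure concentrates all of its mass on $\mathcal{Z}$, where $f$ is bounded above by $M$. Concretely, if $\mu(\mathcal{Z})=1$ then $\int_{\mathbb{R}^{m+1}} f\,d\mu = \int_{\mathcal{Z}} f\,d\mu \le M\,\mu(\mathcal{Z}) = M$, using $f(\mathbf{x})\le M$ for $\mathbf{x}\in\mathcal{Z}$. Taking the supremum over all such $\mu$ then gives $N\le M$. This direction is purely a monotonicity-of-the-integral argument and requires no special construction.

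For the reverse inequality $M\le N$, the essential idea is that the feasible set of measures is rich enough to contain the Dirac point masses. For any $\mathbf{x}'\in\mathcal{Z}$, let $\delta_{\mathbf{x}'}$ denote the point mass at $\mathbf{x}'$; it satisfies $\delta_{\mathbf{x}'}(\mathcal{Z})=1$, so it is feasible, and $\int_{\mathbb{R}^{m+1}} f\,d\delta_{\mathbf{x}'}=f(\mathbf{x}')$. Hence $N\ge f(\mathbf{x}')$ for every $\mathbf{x}'\in\mathcal{Z}$, and supremizing over $\mathbf{x}'\in\mathcal{Z}$ yields $N\ge M$. Combining the two bounds gives $M=N$, which establishes the lemma.

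I do not expect a serious obstacle here: the point-mass construction makes the nontrivial direction immediate, and this is precisely why the lemma is placed as a warm-up toward Proposition~\ref{prop.pointvsdist}. The only points requiring minor care are degenerate cases. When $\mathcal{Z}=\emptyset$, both sides equal $-\infty$ under the usual conventions, since there is no probability measure with $\mu(\emptyset)=1$. When $M=+\infty$, the point-mass argument still forces $N=+\infty$, because for every level $v$ one can select $\mathbf{x}'\in\mathcal{Z}$ with $f(\mathbf{x}')>v$ and then $\int f\,d\delta_{\mathbf{x}'}>v$. Finally, measurability of $f$ is needed only to make the integral on the right-hand side well-defined; the evaluation $\int f\,d\delta_{\mathbf{x}'}=f(\mathbf{x}')$ used in the reverse direction holds for an arbitrary function $f$.
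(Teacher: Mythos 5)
Your proof is correct and follows essentially the same route as the paper: Dirac point masses at ($\epsilon$-)optimal points of $\mathcal{Z}$ give the inequality $\sup_{\mathbf{x}'\in\mathcal{Z}}f(\mathbf{x}')\leq\sup_{\mu}\int f\,d\mu$, and concentration of any feasible $\mu$ on $\mathcal{Z}$ gives the reverse. Your handling of the reverse direction via direct monotonicity of the integral ($\int_{\mathcal{Z}}f\,d\mu\leq M\,\mu(\mathcal{Z})$) is in fact cleaner than the paper's detour through the auxiliary function $\hat{f}$, and your remarks on the degenerate cases $\mathcal{Z}=\emptyset$ and $M=+\infty$ are a welcome addition the paper omits.
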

\begin{proof}Let $\hat{\mathbf{x}}$ be a $\epsilon-$optimal solution
to the left hand side, consider the probability measure $\mu'$ that
put mass $1$ on $\hat{\mathbf{x}}$, which satisfy
$\mu'(\mathcal{Z})=1$. Hence, we have
\[\sup_{\mathbf{x}'\in \mathcal{Z}} f(\mathbf{x}')-\epsilon \leq\sup_{\mu\in\mathcal{P}|
\mu(\mathcal{Z})=1}\int_{\mathbb{R}^{m+1}}
f(\mathbf{x})d\mu(\mathbf{x}),\] since $\epsilon$ can be arbitrarily
small, this leads to
\begin{equation}\label{equ.pointlessprob}\sup_{\mathbf{x}'\in \mathcal{Z}}
f(\mathbf{x}') \leq\sup_{\mu\in\mathcal{P}|
\mu(\mathcal{Z})=1}\int_{\mathbb{R}^{m+1}}
f(\mathbf{x})d\mu(\mathbf{x}).\end{equation} Next construct function
$\hat{f}:\mathbb{R}^{m+1}\rightarrow \mathbb{R}$ as
\[\hat{f}(\mathbf{x})\triangleq \left\{\begin{array}{ll}
f(\hat{\mathbf{x}}) & \mathbf{x}\in \mathcal{Z}; \\ f(\mathbf{x}) &
\mbox{otherwise}. \end{array}\right.\] By definition of
$\hat{\mathbf{x}}$ we have $f(\mathbf{x})\leq
\hat{f}(\mathbf{x})+\epsilon$ for all $\mathbf{x}\in
\mathbb{R}^{m+1}$. Hence, for any probability measure $\mu$ such
that $\mu(\mathcal{Z})=1$, the following holds
\[\int_{\mathbb{R}^{m+1}} f(\mathbf{x})d\mu(x)\leq  \int_{\mathbb{R}^{m+1}}
\hat{f}(\mathbf{x})d\mu(x)+\epsilon=f(\hat{\mathbf{x}})+\epsilon\leq
\sup_{\mathbf{x}'\in \mathcal{Z}} f(\mathbf{x}')+\epsilon.\] This
leads to
\[\sup_{\mu\in \mathcal{P}|\mu(\mathcal{Z})=1}\int_{\mathbb{R}^{m+1}} f(\mathbf{x})d\mu(x)\leq \sup_{\mathbf{x}'\in \mathcal{Z}}
f(\mathbf{x}')+\epsilon.\] Notice $\epsilon$ can be arbitrarily
small, we have \begin{equation}\label{equ.problesspoint}\sup_{\mu\in
\mathcal{P}|\mu(\mathcal{Z})=1}\int_{\mathbb{R}^{m+1}}
f(\mathbf{x})d\mu(x)\leq \sup_{\mathbf{x}'\in \mathcal{Z}}
f(\mathbf{x}')\end{equation}Combining (\ref{equ.pointlessprob}) and
(\ref{equ.problesspoint}), we prove the lemma.
\end{proof}
Now we proceed to prove the proposition. Let $\hat{\mathbf{x}}_i$ be
an $\epsilon-$optimal solution to $\sup_{\mathbf{x}_i\in
\mathcal{Z}_i} f(\mathbf{x}_i)$. Observe that the empirical
distribution for $(\hat{\mathbf{x}}_1,\cdots, \hat{\mathbf{x}}_n)$
belongs to $\mathcal{P}_n$, since $\epsilon$ can be arbitrarily
close to zero, we have
\begin{equation}\label{equ.them.pointsmallthandis}\frac{1}{n}\sum_{i=1}^n
\sup_{\mathbf{x}_i\in \mathcal{Z}_i} f(\mathbf{x}_i)
\leq\sup_{\mu\in \mathcal{P}_n}\int_{\mathbb{R}^{m+1}}
f(\mathbf{x})d\mu(\mathbf{x}). \end{equation}

Without loss of generality, assume
\begin{equation}\label{equ.increasewithi}f(\hat{\mathbf{x}}_1)\leq
f(\hat{\mathbf{x}}_2)\leq\cdots\leq
f(\hat{\mathbf{x}}_n).\end{equation} Now construct the following
function
\begin{equation}\hat{f}(\mathbf{x})\triangleq \left\{\begin{array}{ll}
\min_{i|\mathbf{x}\in \mathcal{Z}_i}f(\hat{\mathbf{x}}_i) &
\mathbf{x}\in \bigcup_{j=1}^n \mathcal{Z}_j;\\ f(\mathbf{x}) &
\mbox{otherwise}.
\end{array}\right.\end{equation}
Observe that $f(\mathbf{x})\leq \hat{f}(\mathbf{x})+\epsilon$ for
all $\mathbf{x}$.

Furthermore, given $\mu\in \mathcal{P}_n$, we have
\begin{equation*}
\begin{split}&\int_{\mathbb{R}^{m+1}} f(\mathbf{x})d\mu(\mathbf{x})-\epsilon\\
=&\int_{\mathbb{R}^{m+1}}
\hat{f}(\mathbf{x})d\mu(\mathbf{x})\\
=& \sum_{k=1}^n f(\hat{\mathbf{x}}_k)\Big[\mu(\bigcup_{i=1}^k
\mathcal{Z}_i)-\mu(\bigcup_{i=1}^{k-1} \mathcal{Z}_i)\Big]
\end{split}
\end{equation*}
Denote $\alpha_k\triangleq \Big[\mu(\bigcup_{i=1}^k
\mathcal{Z}_i)-\mu(\bigcup_{i=1}^{k-1} \mathcal{Z}_i)\Big]$, we have
\[\sum_{k=1}^n {\alpha_k}=1,\quad \sum_{k=1}^t \alpha_k\geq
t/n.\]Hence by Equation~(\ref{equ.increasewithi}) we have
\[\sum_{k=1}^n {\alpha_k}f(\hat{\mathbf{x}}_k)\leq \frac{1}{n}
\sum_{k=1}^n f(\hat{\mathbf{x}}_k).\] Thus we have for any $\mu\in
\mathcal{P}_n$,
\[\int_{\mathbb{R}^{m+1}} f(\mathbf{x})d\mu(\mathbf{x})-\epsilon\leq \frac{1}{n}
\sum_{k=1}^n f(\hat{\mathbf{x}}_k).\] Therefore,
\[\sup_{\mu\in \mathcal{P}_n}\int_{\mathbb{R}^{m+1}} f(\mathbf{x})d\mu(\mathbf{x})-\epsilon\leq \sup_{\mathbf{x}_i\in \mathcal{Z}_i}\frac{1}{n}
\sum_{k=1}^n f(\mathbf{x}_k).\] Notice $\epsilon$ can be arbitrarily
close to $0$, we proved the proposition by combining
with~(\ref{equ.them.pointsmallthandis}).\end{proof}

\section{Proof of Corollary~\ref{cor.distributionalrobust}}
{\bf Corollary~\ref{cor.distributionalrobust}.} {\it Given
$\mathbf{b}\in \mathbb{R}^n$, $A\in \mathbb{R}^{n\times m}$, the
following equation holds for any $\mathbf{x}\in \mathbb{R}^m$,
\begin{equation}\label{equ.distributionrobustinappendix}
\|\mathbf{b}-A\mathbf{x}\|_2  +\sqrt{n}c_n\|\mathbf{x}\|_1+
\sqrt{n}c_n =\sup_{\mu\in\hat{\mathcal{P}}(n)}
\sqrt{n\int_{\mathbb{R}^{m+1}}
(b'-\mathbf{r}'^\top\mathbf{x})^2d\mu(\mathbf{r}',b')}.\end{equation}Here,
\begin{equation*}
\begin{split}
\hat{\mathcal{P}}(n)&\triangleq
\bigcup_{\|\boldsymbol{\sigma}\|_2\leq\sqrt{n}c_n;\,\forall i:
\|\boldsymbol{\delta}_i\|_2 \leq \sqrt{n}c_n}\mathcal{P}_n(A,
\Delta, \mathbf{b},
\boldsymbol{\sigma});\\
\mathcal{P}_n (A, \Delta, \mathbf{b},
\boldsymbol{\sigma})&\triangleq \{\mu\in \mathcal{P} |
\mathcal{Z}_i=[b_i-\sigma_i,b_i+\sigma_i]\times\prod_{j=1}^m
[a_{ij}-\delta_{ij}, a_{ij}+\delta_{ij}];\\ &\quad\quad\forall
S\subseteq \{1,\cdots, n\}: \mu(\bigcup_{i\in S} \mathcal{Z}_i)\geq
|S|/n \}.\end{split}\end{equation*}}
\begin{proof}
The right-hand-side of
Equation~(\ref{equ.distributionrobustinappendix}) equals
\[\sup_{\|\boldsymbol{\sigma}\|_2\leq\sqrt{n}c_n;\,\forall i:
\|\boldsymbol{\delta}_i\|_2 \leq
\sqrt{n}c_n}\Big\{\sup_{\mu\in\mathcal{P}_n(A, \Delta, \mathbf{b},
\boldsymbol{\sigma})} \sqrt{n\int_{\mathbb{R}^{m+1}}
(b'-\mathbf{r}'^\top\mathbf{x})^2d\mu(\mathbf{r}',b')}\Big\}.\]Notice
by the equivalence to robust formulation, the left-hand-side equals
to
\begin{equation*}
\begin{split}
&\max_{\|\boldsymbol\sigma\|_2\leq \sqrt{n}c_n; \forall i:
\|\boldsymbol{\delta}_i\|_2\leq \sqrt{n}c_n}
\Big\|\mathbf{b}+\boldsymbol{\sigma}
-\big(A+[\boldsymbol{\delta}_1,\cdots,\boldsymbol{\delta}_m]
\big)\mathbf{x}\Big\|_2
\\=&\sup_{\|\boldsymbol{\sigma}\|_2\leq\sqrt{n}c_n;\,\forall i:
\|\boldsymbol{\delta}_i\|_2 \leq \sqrt{n}c_n}
\left\{\sup_{(\hat{b}_i, \hat{\mathbf{r}}_i)\in
[b_i-\sigma_i,b_i+\sigma_i]\times\prod_{j=1}^m [a_{ij}-\delta_{ij},
a_{ij}+\delta_{ij}]}
\sqrt{\sum_{i=1}^n(\hat{b}_i-\hat{\mathbf{r}}_i^\top\mathbf{x})^2}\right\}
\\=&\sup_{\|\boldsymbol{\sigma}\|_2\leq\sqrt{n}c_n;\,\forall i:
\|\boldsymbol{\delta}_i\|_2 \leq \sqrt{n}c_n}\sqrt{\sum_{i=1}^n
\sup_{(\hat{b}_i, \hat{\mathbf{r}}_i)\in
[b_i-\sigma_i,b_i+\sigma_i]\times\prod_{j=1}^m [a_{ij}-\delta_{ij},
a_{ij}+\delta_{ij}]}
(\hat{b}_i-\hat{\mathbf{r}}_i^\top\mathbf{x})^2},\end{split}\end{equation*}
furthermore, applying Proposition~\ref{prop.pointvsdist} yields
\begin{equation*}
\begin{split}&\sqrt{\sum_{i=1}^n \sup_{(\hat{b}_i, \hat{\mathbf{r}}_i)\in
[b_i-\sigma_i,b_i+\sigma_i]\times\prod_{j=1}^m [a_{ij}-\delta_{ij},
a_{ij}+\delta_{ij}]}
(\hat{b}_i-\hat{\mathbf{r}}_i^\top\mathbf{x})^2}\\=&\sqrt{\sup_{\mu\in\mathcal{P}_n(A,
\Delta , \mathbf{b}, \boldsymbol{\sigma})} n\int_{\mathbb{R}^{m+1}}
(b'-\mathbf{r}'^\top\mathbf{x})^2d\mu(\mathbf{r}',b')}\\
=&\sup_{\mu\in\mathcal{P}_n(A, \Delta , \mathbf{b},
\boldsymbol{\sigma})}\sqrt{ n\int_{\mathbb{R}^{m+1}}
(b'-\mathbf{r}'^\top\mathbf{x})^2d\mu(\mathbf{r}',b')},\end{split}\end{equation*}
which proves the corollary.
\end{proof}
\end{document}